%
%

\documentclass[aps,pre,preprint,a4paper]{revtex4-1}
\usepackage{amsthm}
\usepackage{amsmath}
\usepackage{amssymb}
\usepackage{graphicx}
\usepackage{dcolumn}
\usepackage{bm}

\newtheorem{thm}{Theorem}[]


\begin{document}


\title{Hamiltonian mechanics of generalized eikonal waves} 



\author{J. W. Burby}
 \affiliation{Princeton Plasma Physics Laboratory, Princeton, New Jersey 08543, USA}
\author{H. Qin}
 \affiliation{Princeton Plasma Physics Laboratory, Princeton, New Jersey 08543, USA}
 \affiliation{Dept. of Modern Physics, University of Science and Technology of China, Hefei, Anhui 230026, China}


\date{\today}

\begin{abstract}
In accordance with the Keller-Maslov global WKB theory, a semiclassical scalar wave field is best encoded as a triple consisting of (i) a Lagrangian submanifold $\Lambda$ in the ray phase space, (ii) a density $\mu$ on $\Lambda$, and (iii) an overall phase factor $\phi$. We present the Hamiltonian structure of the Cauchy problem for such a ``geometric semiclassical state" in the special case where the wave operator is Hermetian. Variational, symplectic, and Poisson formulations of the time evolution equations for $(\Lambda,\mu,\phi)$ are identitfied. Because we work in terms of the Keller-Maslov global WKB ansatz, as opposed to the more restrictive $\psi=a \exp(i S/\epsilon)$, all of our results are insensitive to the presence of caustics. In particular, because the variational principle is insensitive to caustics, the latter may be used to construct structure-perserving numerical integrators for scalar wave equations.
\end{abstract}

\pacs{}

\maketitle 

\section{Introduction}
The eikonal assumption, $\psi=a\exp(i S/\epsilon)$, for scalar waves $\psi$ that obey the linear wave equation
\begin{align}\label{first}
\mathfrak{D}\psi=0,
\end{align} 
where $\mathfrak{D}$ is a Hermetian operator, and $\epsilon$ is the semiclassical ordering parameter, pervades much of the literature on semiclassical wave propagation. The reason for this is clear; eikonal waves share many properties with their idealized brethren, the plane waves. Unfortunately, eikonality is not a physically-robust property. As soon as caustics develop, the eikonal ansatz breaks down. In order to overcome this difficulty, while retaining many of the useful conceptual tools the eikonal assumption provides, J. B. Keller \cite{Keller_1958} and V. P. Maslov \cite{maslov} introduced a new class of semiclassical wave solutions for Eq.\,(\ref{first}). For lack of standard terminology, we will refer to these generalized eikonal waves as ``Keller-Maslov waves", or KMWs. Away from caustics, KMWs are superpositions of eikonal waves with distinct wave vectors. Near caustics, their momentum-space wavefunctions are eikonal. Detailed mathematical accounts of Keller and Maslov's theory (written in English) can be found in Refs.\,\cite{geo_asymp,Bates_97,Duistermaat_1974}. Also see Ref.\,\cite{Percival_1977}.

The purpose of this article is to describe the structure of the phase space for KMWs. We will identify the appropriate phase space as a set. Then we will show that the dynamics of a KMW in this phase space are Hamiltonian. In particular, we will give explicit expressions for the KMW Lagrange and Poisson brackets. Along the way, we will also identify a variational principle for KMWs that can be viewed as a generalization of the one studied in Ref.\,\cite{Kaufman_1987}. 

These results have notable practical and conceptual implications. The conceptual implications stem from the fact that caustic points and non-caustic points are on equal footing in the KMW phase space. In particular, any result that can be expressed in terms of this phase space is automatically insensitve to the presence of caustics. For example, by thinking along these lines, the theory of ponderomotive forces on waves developed in Ref. \cite{Dodin_2014} can be generalized to allow for caustics. On the practical side, our results should prove to be useful when numerically computing semiclassical solutions of wave equations. The Hamiltonian and variational structures we identify suggest how to develop structure-preserving integrators for this purpose. Because such integrators would be formulated in the KMW phase space, they would automatically handle caustic formation in a robust manner.

We will begin by identifying the KMW phase space as a set in Section\,\ref{what_is_phase_space}. In Section\,\ref{covariant}, we will present a variational formulation for KMW dynamics in this phase space. In Section\,\ref{cauchy_hamiltonian}, we will use this variational principle to express the KMW dynamical equations as a Hamiltonian system. Finally, in Section\,\ref{discussion} we will discuss our results. In particular, we will make a case for formulating semiclassical wave integrators in terms of the KMW phase space.


\section{What is the phase space for a Keller-Maslov wave?\label{what_is_phase_space}}
A physical system's phase space is its collection of allowable \emph{mechanical states}. A system's mechanical state at time $t$ consists of the minimal collection of temporally-local observables that uniquely determines the system's time evolution. For instance, by watching a planar pendulum swing, its spatial location and velocity vector can be measured at any instant of time $t_o$. According to Newton's second law, which a penduluum obeys with great accuracy, the pendulum's evolution for $t>t_o$ is completely determined by this information. Therefore, a model for the phase space of the pendulum is the set of positions and velocities that the pendulum could possibly achieve, which is the tangent bundle of the circle, $TS^1$. 

Mathematically, the phase space for a system described by a PDE on the spacetime, $M$, can be deduced using an approach invented in the context of gauge theories. First the \emph{covariant phase space}\,\cite{Lee_1990} (also see Ref.\,\cite{Khavkine_2014}) for the PDE is identified. This is a subset of the infinite-dimensional space of field configurations that consists of solutions of the PDE. Then a space-time decomposition, $M\approx Q\times\mathbb{R}$, is introduced and the covariant phase space is identified with the space of PDE initial data on $Q$. This space of initial data is the system's phase space. The remainder of this section is devoted to applying this methodology to KMWs.

KMWs are precisely defined in terms of a certain PDE on spacetime. Specifically, an \emph{abstract} or \emph{geometric} KMW consists of a triple $(\mathfrak{L},\mathfrak{m},\phi)$, where $\mathfrak{L}$ is a connected Lagrangian submanifold \cite{FoM,Weinstein_1971} $\mathfrak{L}\subset T^*M$ that satisfies the corrected Bohr-Sommerfeld condition, $\mathfrak{m}$ is a positive $N$-density with weight $1$ \cite{Paiva_2008} on $\mathfrak{L}$ ($N=\text{dim}(M)=\text{dim}(\mathfrak{L})$), and $\phi$ is a real number modulo $\epsilon\pi/2$. The physically-realizable abstract KMWs are those that satisfy the geometric PDE on spacetime,
\begin{align}
D|\mathfrak{L}=0\label{global_hamilton_jacobi_equation}\\
L_{X_D}\mathfrak{m}=0,\label{global_amplitude_transport_equation}
\end{align}
where $D:T^*M\rightarrow\mathbb{R}$ is the principal symbol of the operator $\mathfrak{D}$, $X_D$ is the Hamiltonian vector field on $T^*M$ with Hamiltonian function $D$, and $L_{X_{D}}$ denotes the Lie derivative along $X_D$. To see how this pair of equations arises directly from an asymptotic analysis of Eq.\,(\ref{first}) in the limit $\epsilon\rightarrow 0$, see Refs.\,\cite{geo_asymp,Bates_97}.
A concrete KMW $\psi$ is any wave field that can be recovered from an abstract KMW by applying Maslov's canonical map (see Refs.\,\cite{geo_asymp,Bates_97} for particularly lucid descriptions) $I(\mathfrak{L},\mathfrak{m},\phi)$ to the triple $(\mathfrak{L},\mathfrak{m},\phi)$\footnote{Strictly speaking}. Because the set of KMWs is identified with the set of abstract KMWs, it is permissible to work in terms of the abstract KMW instead of the concrete wave field.  From this perspective, Eqs.\,(\ref{global_hamilton_jacobi_equation}) and (\ref{global_amplitude_transport_equation}) can be viewed as the system of PDEs that govern KMW behavior. 

While a detailed description of how a wave field $\psi$ can be constructed from an abstract KMW is beyond the purview of this article, it is useful to briefly describe this construction in the absence of caustics. In the absence of caustics, a KMW is merely an eikonal wave, $\psi=a \exp(i S/\epsilon)$. The three constituents, $(\mathfrak{L},\mathfrak{m},\phi)$, of the abstract KMW that generate this wave via Maslov's canonical map are given as follows. The Lagrangian submanifold $\mathfrak{L}$ is the surface in the cotangent bundle of spacetime $T^*M$ swept out by the differential of the phase function, $\mathbf{d}S$. The density $\mathfrak{m}$ is essentially the squared wave amplitude $a^2$. Finally, the phase factor $\phi$ is the integration constant necessary to reconstruct the phase function $S$ from knowledge of the differential of $S$.  

Because Eqs.\,(\ref{global_hamilton_jacobi_equation}) and (\ref{global_amplitude_transport_equation}) comprise a PDE for KMWs, the covariant phase space for KMWs is the set of all $(\mathfrak{L},\mathfrak{m},\phi)$, where $\mathfrak{L}$ is a Lagrangian submanifold of $T^*M$ satisfying the corrected Bohr-Sommerfeld condition and Eq.\,(\ref{global_hamilton_jacobi_equation}), and $\mathfrak{m}$ is a positive $1$-density on $\mathfrak{L}$ satisfying Eq.\,(\ref{global_amplitude_transport_equation}). In short, the covariant phase space consists of all physically-realizable abstract KMWs. The phase space proper may therefore be identified by introducing an arbitrary space-time splitting $M\approx Q\times\mathbb{R}$ and parameterizing physically-realizable abstract KMWs by appropriate initial data on $Q$.

To this end, it is convenient to introduce a representation for our ``fields" $(\mathfrak{L},\mathfrak{m},\phi)$ in terms of generalized potentials. This approach parallels the procedure used in electrodynamics whereby the field strength $F$ is represented in terms of the $4$-potential $A$ as $F=\mathbf{d}A$. We set
\begin{align}
\mathfrak{L}&=\Phi(\mathcal{L}\times\mathbb{R})\\
\mathfrak{m}&=\Phi_{*}\mu|dt|\\
\phi&=\phi_{t_o}
\end{align}
where $\mathcal{L}$ is a connected $N$-dimensional manifold, $\mu|dt|$ is a positive density on $\mathcal{L}\times\mathbb{R}$, $\phi_t$ is a time-dependent real number modulo $\epsilon\pi/2$, and $\Phi:\mathcal{L}\times\mathbb{R}\rightarrow T^*(Q\times\mathbb{R})\approx T^*Q\times\mathbb{R}\times\mathbb{R}$ is a Lagrangian embedding of $\mathcal{L}\times\mathbb{R}$. We will assume $\Phi$ is of the form
\begin{align}
\Phi(x,t)=(\iota_t(x),t,K_t(x)),
\end{align}
where $\iota_t:\mathcal{L}\rightarrow T^*Q$ is a Lagrangian embedding of $\mathcal{L}$ into $T^*Q$ satisfying the corrected Bohr-Sommerfeld condition and 
\begin{align}
K_t(x)=\dot{\phi}_t+\int_{p}^x\left(\frac{\rm{d}}{\rm{d}t}\iota_t^*\vartheta\right)-\vartheta_{\iota_t(x)}(\dot{\iota}_t(x)).
\end{align}
The symbol $\vartheta$ denotes the canonical $1$-form on $T^*Q$ (in canonical coordinates $\vartheta=p_i\,\mathbf{d}q^i$) and $p\in\mathcal{L}$ is an arbitrary, but distinguished point in $\mathcal{L}$. All abstract KMWs that extend over the entire time axis can be expressed in this form. The time-dependent embedding $\iota_t:\mathcal{L}\rightarrow T^*Q$, the time-dependent number $\phi_t:\mathcal{L}\rightarrow\mathbb{R}~\text{mod}~\epsilon\pi/2$,  and the time-dependent density $\mu_t$ on $\mathcal{L}$ play the role of potentials for KMWs. 

Because these potentials admit the gauge transformation
\begin{align}
\iota_t&\mapsto \iota_t\circ\eta_t\\
\phi_t&\mapsto \phi_t + \int_{p}^{\eta_t(p)}\iota_t^*\vartheta \\
\mu_t&\mapsto \eta_t^*\mu_t,
\end{align}
where $\eta_t$ is an arbitrary time-dependent diffeomorphism of $\mathcal{L}$ that is isotopic to the identity, there are many possible physically-consistent dynamical laws they might satisfy. However, it is straightforward to show that, given a set of potentials for a physically-realizable KMW, there is always a gauge transformation that leads to potentials satisfying the equations
\begin{align}
&\dot{\iota}_t(x)=X_{E_t}(\iota_t(x))\label{idot}\\
&\dot{\phi}_t=\left(\vartheta(X_{E_t})-E_t\right)(\iota_t(p))\label{phidot}\\
&\frac{\rm{d}}{\rm{d}t}\left(\iota_t^*\left(\rho_t\right)\mu_t\right)=0,\label{mudot}
\end{align} 
where $E_t$ is a time-dependent function on $T^*Q$ that is defined implicitly by the relation $D(z,t,-E_t(z))=0$, $X_{E_t}$ is the Hamiltonian vector field on $T^*Q$ with Hamiltonian $E_t$, and $\rho_t(z)=\partial D/\partial U(z,t,-E_t(z))$. Note that this definition of $E_t$ requires $\partial D/\partial U(z,t,U)\neq 0$ along $\mathfrak{L}$ in $T^*M$.

Equations\,(\ref{idot}), (\ref{phidot}), and (\ref{mudot}) prescribe an initial-value problem on $Q$ that can be used to generate KMWs. For conveniece, we will refer to them as the KMW potential equations of motion, or the KMWP equations of motion. This is not an entirely-physical initial-value problem, however. Initial conditions that are related by time-independent gauge transformations generate the same abstract KMW. In other words, at a given time $t$, the triple $(\iota_t,\phi_t,\mu_t)$ is not a physically-meaningful quantity. Nevertheless, \emph{the orbit} of $(\iota_t,\phi_t,\mu_t)$ under the action of the time-independent gauge transformations, which we will denote $[(\iota_t,\phi_t,\mu_t)]$, is a physical quantity. The evolution of an orbit determines the evolution of a KMW, and vice versa. Therefore, the phase space for KMWs, $\mathcal{P}$, is the quotient of $(\iota,\phi,\mu)$-space by time-independent gauge transformations. It can also be shown that KMWP equations of motion uniquely determine a dynamical system on the KMW phase space. This follows from the symmetry of Eqs.\,(\ref{idot}), (\ref{phidot}), and (\ref{mudot}) under time-independent gauge transformations.

We summarize this result with the following theorem. Let $\mathcal{L}$ be an $N=\text{dim}(Q)$-dimensional manifold with distinguished point $p\in\mathcal{L}$ and $\text{den}(\mathcal{L})$ the set of positive densities on $\mathcal{L}$. 
Fix an integral de Rham class $a\in H^1(\mathcal{L},\mathbb{R})$ and let $I$ be a connected component of the collection of Lagrangian embeddings $\iota:\mathcal{L}\rightarrow T^*Q$ that satisfy
\begin{align}
\frac{\epsilon\pi}{2}\mu_\iota+[\iota^*\vartheta]=2\pi\epsilon\,a,
\end{align}
where $\mu_\iota$ is the Maslov class of the embedding $\iota$ \cite{Arnold_1967}.
Set $\mathcal{P}_o=I\times\left(\mathbb{R}~\text{mod}\frac{\epsilon\pi}{2}\right)\times\text{den}(\mathcal{L})$.
\begin{thm} \label{thm0}
There is a free left $\text{Diff}(\mathcal{L})_o$-action on the space of KMW potentials, $\Phi:\text{Diff}(\mathcal{L})_o\times \mathcal{P}_o\rightarrow\mathcal{P}_o $, given by
\begin{align}
\Phi_\eta(\iota,\phi,\mu)=\left(\iota\circ\eta^{-1},\phi+\int_p^{\eta^{-1}(p)}\iota^*\vartheta,\eta_*\mu\right).
\end{align}
Here $\text{Diff}(\mathcal{L})_o$ is the group of diffeomorphisms of $\mathcal{L}$ that are isotopic to the identity. The phase space for KMWs is $\mathcal{P}=\mathcal{P}_o/\text{Diff}(\mathcal{L})_o$.
\end{thm}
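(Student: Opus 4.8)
The plan is to establish four things: (i) that for each $\eta\in\text{Diff}(\mathcal{L})_o$ the stated formula defines a map $\Phi_\eta:\mathcal{P}_o\to\mathcal{P}_o$; (ii) that $\eta\mapsto\Phi_\eta$ is a left action; (iii) that this action is free; and (iv) that the orbit space $\mathcal{P}_o/\text{Diff}(\mathcal{L})_o$ is the KMW phase space, which is the content of the gauge covariance discussed above. Since the first slot of $\Phi_\eta$ merely reparameterizes the embedding and the third merely pushes the density forward, almost all of the content of (i)--(iii) lives in the phase slot $\int_p^{\eta^{-1}(p)}\iota^{*}\vartheta$, so that is where I would concentrate.

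For (i), $\iota\circ\eta^{-1}$ is an embedding composed with a diffeomorphism, hence again a Lagrangian embedding of $\mathcal{L}$ into $T^{*}Q$, and $\eta_{*}\mu$ is again a positive density; the only points needing an argument are that $\iota\circ\eta^{-1}$ remains in the fixed connected component $I$ and satisfies the corrected Bohr-Sommerfeld condition, and that the phase increment is a well-defined element of $\mathbb{R}\ \text{mod}\ \epsilon\pi/2$. All of these use that $\eta$ is isotopic to the identity, hence induces the identity on the cohomology of $\mathcal{L}$ and fixes the Maslov class: then $[(\iota\circ\eta^{-1})^{*}\vartheta]=[\iota^{*}\vartheta]$ and $\mu_{\iota\circ\eta^{-1}}=\mu_\iota$, so the relation $\frac{\epsilon\pi}{2}\mu_\iota+[\iota^{*}\vartheta]=2\pi\epsilon\,a$ is preserved, while composing with an isotopy from $\mathrm{id}$ to $\eta$ produces a path from $\iota$ to $\iota\circ\eta^{-1}$ inside the constraint set, so the component $I$ is preserved. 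Pairing the same relation with an integral $1$-cycle and using that $a$ and the Maslov class are integral shows the periods of $\iota^{*}\vartheta$ lie in $2\pi\epsilon\,\mathbb{Z}+\frac{\epsilon\pi}{2}\mathbb{Z}=\frac{\epsilon\pi}{2}\mathbb{Z}$, so $\int_p^{\eta^{-1}(p)}\iota^{*}\vartheta$ is independent of the path modulo $\epsilon\pi/2$.

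For (ii), $\Phi_{\mathrm{id}}=\mathrm{id}$ is immediate, and in the first and third slots $\Phi_g\circ\Phi_h=\Phi_{gh}$ is just $\iota\circ h^{-1}\circ g^{-1}=\iota\circ(gh)^{-1}$ and $g_{*}(h_{*}\mu)=(gh)_{*}\mu$. In the phase slot one checks additivity of the cocycle: applying $\Phi_h$ and then $\Phi_g$ shifts $\phi$ by $\int_p^{h^{-1}(p)}\iota^{*}\vartheta+\int_p^{g^{-1}(p)}(\iota\circ h^{-1})^{*}\vartheta$; writing $(\iota\circ h^{-1})^{*}\vartheta=(h^{-1})^{*}\iota^{*}\vartheta$ and applying the pullback (change-of-variables) identity for line integrals turns the second term into $\int_{h^{-1}(p)}^{h^{-1}(g^{-1}(p))}\iota^{*}\vartheta$, so that concatenation of paths and the identity $h^{-1}\circ g^{-1}=(gh)^{-1}$ collapse the sum to $\int_p^{(gh)^{-1}(p)}\iota^{*}\vartheta$, the shift appearing in $\Phi_{gh}$. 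Step (iii) then follows at once: if $\Phi_\eta(\iota,\phi,\mu)=(\iota,\phi,\mu)$, then $\iota\circ\eta^{-1}=\iota$ from the first slot, and injectivity of the embedding $\iota$ forces $\eta=\mathrm{id}$.

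Step (iv) is interpretive. Comparing formulas, the maps $\Phi_\eta$ are precisely the \emph{time-independent} gauge transformations of the potentials introduced above (the gauge parameter $\eta_t$ taken constant in $t$ and equal to $\eta^{-1}$). By the discussion preceding the theorem, two elements of $\mathcal{P}_o$ yield the same abstract KMW $(\mathfrak{L},\mathfrak{m},\phi)$ iff they lie in one $\Phi$-orbit, and the KMWP equations of motion (\ref{idot})--(\ref{mudot}) are equivariant under $\Phi$ and hence descend to a well-defined flow on $\mathcal{P}_o/\text{Diff}(\mathcal{L})_o$; therefore that quotient, with its induced dynamics, is exactly the set of physically realizable abstract KMWs of topological type $a$, i.e.\ the phase space $\mathcal{P}$. (A smooth-manifold structure on $\mathcal{P}$, which the theorem does not claim, would further require properness of $\Phi$.) The step I expect to demand the most care is (i): verifying that the phase increment is well defined modulo $\epsilon\pi/2$ and that the Bohr-Sommerfeld datum and the component $I$ are preserved, since these are exactly the places where the hypothesis $\eta\in\text{Diff}(\mathcal{L})_o$ is genuinely used rather than pure bookkeeping.
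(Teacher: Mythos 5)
Your proposal is correct. The paper itself gives no formal proof of this theorem --- it is stated as a summary of the preceding discussion of gauge transformations --- and your argument fills in the details of exactly the route that discussion implies: identifying $\Phi_\eta$ with the time-independent gauge transformations, verifying the cocycle identity for the phase shift by change of variables, using the corrected Bohr--Sommerfeld condition (integrality of $a$ and of the Maslov class) to show the phase increment is well defined modulo $\epsilon\pi/2$, and getting freeness from injectivity of the embedding $\iota$. Your step (i) correctly isolates the only places where $\eta\in\text{Diff}(\mathcal{L})_o$ is genuinely needed, which the paper leaves implicit.
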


Notice that the phase space for KMWs is the base of an infinite-dimensional principal-$\text{Diff}(\mathcal{L})_o$ bundle. This allows us to replace statements pertaining to the phase space $\mathcal{P}$ with gauge-invariant statements pertaining to the space of potentials $\mathcal{P}_o$. Because calculations are somewhat less cumbersome when performed in $\mathcal{P}_o$, we will adopt this practice in what follows.

The quotient $\mathcal{P}_o/\text{Diff}(\mathcal{L})_o$ can be described in concrete terms. Suppose we are given a point $(\iota,\phi,\mu)\in\mathcal{P}_o$. The following gauge-invariant geometric data can be extracted from this point. The image of $\mathcal{L}$ under $\iota$, $\Lambda\equiv\iota(\mathcal{L})$, is gauge invariant because the effect of a gauge transformation on $\iota$ amounts to a relabeling of $\mathcal{L}$. For the same reason, the pushforward $m=\iota_*\mu$ is a gauge-invariant positive density on $\Lambda$. This ``quantum relabeling symmetry" plays a similar role in KMW theory as particle relabeling symmetry plays in Euler-Poincar\'e theory \cite{Holm_1998}. Let $E=\Lambda\times\mathbb{R}~\text{mod}\frac{\epsilon\pi}{2}$ be the trivial $\mathbb{R}~\text{mod}\frac{\epsilon\pi}{2}$-bundle over $\Lambda$. Equip $E$ with the principal connection $\mathcal{A}=\mathbf{d}\phi-\iota_\Lambda^*\vartheta$, where $\phi$ is the $\mathbb{R}~\text{mod}\frac{\epsilon\pi}{2}$ coordinate on $E$ and $\iota_\Lambda$ is the canonical inclusion $\Lambda\rightarrow T^*Q$. The phase factor $\phi$ determines a gauge-invariant parallel section, $s$, of $E$. Explicitly, for $l\in\Lambda$, $s(l)=(l,\tilde{\phi}(l))$, where
\begin{align}
\tilde{\phi}(l)=\phi+\int_{\iota(p)}^l\iota_\Lambda^*\vartheta.
\end{align}
The gauge-invariant triple $(\Lambda,\tilde{\phi},m)$ can be seen to completely characterize the orbit $[(\iota,\phi,\mu)]$. Thus $\mathcal{P}_o/\text{Diff}(\mathcal{L})_o$ consists of Lagrangian submanifolds of $T^*Q$ equipped with a positive density and a parallel section of the $\mathbb{R}~\text{mod}\frac{\epsilon\pi}{2}$-bundle $E$ \footnote{The bundle $E$ is essentially the $\mathbb{Z}_4$-reduction of the \emph{phase bundle} described in Ref.\,\cite{Bates_97}. A completely rigorous treatment of KMWs would require replacing $E$ with the phase bundle. The impact of this fact on our analysis is that the overall phases of our KMWs are only determined mod $\mathbb{Z}_4$. This minor deficiency is not difficult to repair, but doing so would significantly complicate our presentation.}.

If we were to modify the definition of $\mathcal{P}$ by dropping the Maslov class from the corrected Bohr-Sommerfeld condition and allowing for multi-valued $\tilde{\phi}$, the resulting space would be the cotangent bundle of the collection of \emph{isodrastic Planckian manifolds} \cite{Weinstein_1990}. The KMW phase space, $\mathcal{P}$, is therefore the cotangent bundle of the collection of \emph{quantizable} isodrastic Planckian manifolds. Here, ``quantizable" refers to imposing the corrected Bohr-Sommerfeld condition. In Ref.\,\cite{Weinstein_1990}, Weinstein studied isodrastic Planckian manifolds in order to develop a classical analogue of Berry's phase\,\cite{Berry_1984}. In fact, Weinstein introduced these Planckian manifolds with a good understanding of their relationship with semiclassical scalar waves. It is therefore interesting that Ref.\,\cite{Weinstein_1990} does not explicitly relate the space of isodrastic planckian manifolds to the phase space for KMWs. In particular, it is interesting that this work contains no discussion of the relationship between the Weinstein symplectic form (see Eq.\,(2) in Ref.\cite{Weinstein_1990}) on the set of \emph{isodrastic weighted Lagrangian submanifolds} and a Hamiltonian formulation of semiclassical scalar wave dynamics. As we will see, the Weinstein symplectic form is closely related to the symplectic form on the KMW phase space in the special case where $\mathfrak{D}$ generates the Schr{\"o}dinger equation.


\section{Variational principle for KMWs\label{covariant}}
We now turn to the task of demonstrating that the KMW phase space is actually an infinite-dimensional symplectic manifold and that KMW dynamics are Hamiltonian relative to this symplectic structure. We will exhibit the symplectic structure relevant to general KMW dynamics by developing and then manipulating an appropriate variational principle. This section is devoted to the development of the variational principle.

Our method for identifying this variational principle has its roots in Ref.\,\cite{Kaufman_1987} , which derived a spacetime-covariant variational principle for eikonal vector-valued waves. Starting from a standard (see Refs.\,\cite{Dodin1_2014,brizard_rays}) spacetime-covariant variational principle for Eq.\,(\ref{first}), we will implement the Keller-Maslov ansatz by treating it as an (asymptotic) holonomic constraint. This will lead directly to a spacetime-covariant variational principle for KMWs. We will then break the manifest space-time covariance in order to finally arrive at a variational principle for the KMW time evolution laws.

As is readily verified, solutions of Eq.\,(\ref{first}) are critical points of the action functional
\begin{align}
\mathfrak{S}(\psi)=\text{Re}\int_{M}\psi_{\mathcal{R}}^*(m)\mathfrak{D}\psi_{\mathcal{R}}(m)\,dm, 
\end{align}
where $\mathcal{R}\subset M$ is an arbitrary compact subset of the spacetime, $\psi_{\mathcal{R}}(m)=\psi(m)$ when $m\in\mathcal{R}$, $\psi_{\mathcal{R}}(m)=0$ when $m\not\in\mathcal{R}$, and $dm$ is the measure on spacetime relative to which $\mathfrak{D}$ is self-adjoint, i.e.
\begin{align}
&\varphi\in L_2(M)\Rightarrow \nonumber\\
&\int\varphi^*(m)\mathfrak{D}\varphi(m)\,dm=\int (\mathfrak{D}\varphi(m))^*\varphi(m)\,dm.
\end{align} 
The localized wavefunction $\psi_{\mathcal{R}}$ appears in the action instead of $\psi$ itself in order to ensure that the action integral converges. When varying the action, it is required that $\psi$ is held fixed on the boundary of $\mathcal{R}$. Note that when $\mathfrak{D}=i\hbar\frac{\partial}{\partial t}-\hat{H}_t$ is the wave equation operator for the Schr\"odinger equation, this variational principle is essentially the Dirac-Frenkel principle \cite{Frenkel_1934}.

Now suppose we restrict our attention to solutions of the wave equation that are asymptotically KMWs. With the term ``asymptotic", we refer to the typical semiclassical limit, where the ratio, $\epsilon$, of the wave spatio-temporal scale to that of the background tends to zero. These KMWs should be critical points of the action function $\mathfrak{S}$ in the limit $\epsilon\rightarrow 0$. Therefore, if we constrain the $\psi$ in $\mathfrak{S}(\psi)$ to be a KMW and then calculate the leading-order contribution to the action, we should expect that the result will serve as a variational principle for KMWs. 

Computing the leading-order contribution to the action involves a number of somewhat-subtle applications of the method of stationary phase (Refs.\,\cite{Hormander_1971,Duistermaat_1974} are helpful in this regard). The result is simply
\begin{align}
\mathcal{S}(\mathfrak{L},\mathfrak{m},\phi)&\equiv \lim_{\epsilon\rightarrow 0}\mathfrak{S}(I(\mathfrak{L},\mathfrak{m},\phi))\nonumber\\
&=\int_{\mathfrak{L}}D\,\mathfrak{m}_{\mathfrak{R}},
\end{align}
where $\mathfrak{R}\subset\mathfrak{L}$ is an arbitrary compact subset of $\mathfrak{L}$, $\mathfrak{m}_{\mathfrak{R}}(l)=\mathfrak{m}(l)$ if $l\in \mathcal{R}$, and $\mathfrak{m}(l)=0$ if $l\not\in\mathfrak{R}$. It is straightforward to verify that critical points of $\mathcal{S}$ are precisely the physically-realizeable KMWs. Thus, we have indeed identified a spacetime-covariant variational principle for KMWs, as we expected.
 
This spacetime variational principle can now be used to construct a variational principle for the time evolution laws satisfied by KMWs. First, we introduce an arbitrary space-time splitting $M\approx Q\times\mathbb{R}$ and express the abstract KMW $(\mathfrak{L},\mathfrak{m})$ in terms of the potentials $(\iota_t,\phi_t,\mu_t)$ introduced in the previous section. We then set $\mathfrak{R}=\Phi(\mathcal{L}\times[t_1,t_2])$, where $\Phi:\mathcal{L}\times\mathbb{R}\rightarrow T^*M$ is constructed in terms of the potentials as in the previous section. Note that, because $\mathcal{L}$ need not be compact, this choice of $\mathfrak{R}$ is not necessarily compact. If $\mathcal{L}$ is not compact, then we add the condition that $\mu_t$ should be integrable, i.e. $\int_{\mathcal{L}}\mu_t<\infty$. With these choices in place, the action for KMWs can now be expressed as
\begin{align}
&\mathcal{S}_{[t_1,t_2]}(\iota_\cdot,\phi_\cdot,\mu_\cdot)=\nonumber\\
&\int_{t_1}^{t_2}\int_{\mathcal{L}}D(\iota_t(x),t,K_t(x))\,\mu_t(x)\,dt,
\end{align}
where, as in the previous section
\begin{align}
K_t(x)=\dot{\phi}_t+\int_{p}^x\left(\frac{\rm{d}}{\rm{d}t}\iota_t^*\vartheta\right)-\vartheta_{\iota_t(x)}(\dot{\iota}_t(x)).
\end{align}
Note that we are regarding $\mathcal{S}_{[t_1,t_2]}$ as a functional of paths in $(\iota,\phi,\mu)$-space with fixed endpoints.
 
 
While it can be inferred from the analysis presented so far, we will now prove directly that this action functional has as critical points those curves $t\mapsto (\iota_t,\phi_t,\mu_t)$ that satisfy the dynamical equations for the KMW potentials, Eqs.\,(\ref{idot}), (\ref{phidot}), and (\ref{mudot}).
\begin{thm} \label{thm1}
Let $\iota_t:\mathcal{L}\rightarrow T^*Q$, $\phi_t$, and $\mu_t$, be a time-dependent Lagrangian embedding that satisfies the corrected Bohr-Sommerfeld condition, a time-dependent real number modulo $\epsilon\pi/2$, and a time-dependent positive $N$-density with weight $1$ on $\mathcal{L}$, respectively. If these quantities satisfy
\begin{align}
&\dot{\iota}_t(x)=X_{E_t}(\iota_t(x))\\
&\dot{\phi}_t=\left(\vartheta(X_{E_t})-E_t\right)(\iota_t(p))\\
&\frac{\rm{d}}{\rm{d}t}\left(\iota_t^*\left(\frac{\partial D}{\partial U}\right)\mu_t\right)=0,
\end{align}
then the curve $t\mapsto (\iota_t,\phi_t,\mu_t)$ is a critical point of the action
\begin{align}
&\mathcal{S}_{[t_1,t_2]}(\iota_\cdot,\phi_\cdot,\mu_\cdot)=\nonumber\\
&\int_{t_1}^{t_2}\int_{\mathcal{L}}D(\iota_t(x),t,K_t(x))\,\mu_t(x)\,dt,
\end{align}
regarded as a functional of curves in $(\iota,\phi,\mu)$-space with fixed endpoints.
\end{thm}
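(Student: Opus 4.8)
The plan is to verify criticality directly: compute the first variation of $\mathcal{S}_{[t_1,t_2]}$ along the given curve under independent variations $\delta\iota_t$, $\delta\phi_t$, $\delta\mu_t$ that vanish at $t=t_1$ and $t=t_2$, and show it is zero. Because $K_t$ depends on $\dot\iota_t$ and $\dot\phi_t$ but on no higher derivatives, and $\mu_t$ enters algebraically, this is an ordinary fixed-endpoint variation and $\delta\mathcal{S}_{[t_1,t_2]}=\delta_\iota\mathcal{S}_{[t_1,t_2]}+\delta_\phi\mathcal{S}_{[t_1,t_2]}+\delta_\mu\mathcal{S}_{[t_1,t_2]}$, so it suffices to annihilate each term separately. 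The central preliminary observation is that along any solution of (\ref{idot})--(\ref{mudot}) one has
\begin{align}
K_t(x)=-E_t(\iota_t(x)).
\end{align}
This follows by inserting (\ref{idot}) into the definition of $K_t$, using Cartan's magic formula and the Hamiltonian character of $X_{E_t}$ to rewrite $\frac{\mathrm d}{\mathrm dt}\iota_t^*\vartheta=\mathrm d\big(\iota_t^*(\vartheta(X_{E_t})-E_t)\big)$, performing the (path-independent) line integral over $[p,x]$, and cancelling the leftover term evaluated at $p$ against $\dot\phi_t$ by (\ref{phidot}). Two corollaries follow: $D(\iota_t(x),t,K_t(x))=0$ along the solution (by the definition of $E_t$), and differentiating $D(z,t,-E_t(z))=0$ in $z$ yields $\partial D/\partial z=\rho_t\,\mathrm dE_t$ and $\partial D/\partial U=\rho_t$ along the solution.

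With this in hand the $\mu$- and $\phi$-variations are immediate. Varying $\mu_t$ gives $\int_{t_1}^{t_2}\int_{\mathcal L}D(\iota_t(x),t,K_t(x))\,\delta\mu_t\,\mathrm dt$, which vanishes since $D=0$ along the solution. The variation of $\phi_t$ enters $K_t$ only through $\partial K_t/\partial\dot\phi_t=1$, producing $\int_{t_1}^{t_2}\big(\int_{\mathcal L}(\iota_t^*\rho_t)\,\mu_t\big)\,\delta\dot\phi_t\,\mathrm dt$; integrating by parts in $t$, the boundary term vanishes because $\delta\phi_{t_1}=\delta\phi_{t_2}=0$, while the bulk term vanishes because (\ref{mudot}) says $(\iota_t^*\rho_t)\,\mu_t$ is independent of $t$, hence $\frac{\mathrm d}{\mathrm dt}\int_{\mathcal L}(\iota_t^*\rho_t)\,\mu_t=0$.

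The substance is the $\iota$-variation. The convenient device is to introduce $\mathcal{E}_t(x)\equiv K_t(x)+E_t(\iota_t(x))$, which vanishes identically along the solution by the preliminary observation; together with $\partial D/\partial z=\rho_t\,\mathrm dE_t$ and $\partial D/\partial U=\rho_t$ this yields the clean identity $\delta_\iota\big(D(\iota_t(x),t,K_t(x))\big)=\rho_t(\iota_t(x))\,\delta_\iota\mathcal{E}_t(x)$, which lets us avoid tracking $\langle\mathrm dE_t,\delta\iota_t\rangle$ and $\delta_\iota K_t$ separately. I would then compute $\delta_\iota\mathcal{E}_t$ starting from the paper's formula for $K_t$, using the surface identity $\partial_t(\vartheta(\partial_s z))-\partial_s(\vartheta(\partial_t z))=\mathrm d\vartheta(\partial_t z,\partial_s z)$ (for $z(s,t)=\iota_t(c(s))$ along a path $c$ from $p$ to $x$) to re-express the $\frac{\mathrm d}{\mathrm dt}\iota_t^*\vartheta$ term as an endpoint contribution plus a line integral of $\mathrm d\vartheta$ contracted with $\dot\iota_t$, and then interchanging the variation-, time-, and path-derivatives with the aid of $\mathrm d(\mathrm d\vartheta)=0$. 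Along the solution, using $\dot\iota_t=X_{E_t}\circ\iota_t$ one finds that every $\langle\mathrm dE_t,\delta\iota_t\rangle$-type term cancels and $\delta_\iota\mathcal{E}_t$ collapses to a total time derivative, $\delta_\iota\mathcal{E}_t(x)=-\frac{\mathrm d}{\mathrm dt}\chi_t(x)$, where $\chi_t$ is linear in $\delta\iota_t$ (built from $\vartheta$ evaluated on $\delta\iota_t$ at $p$ together with a line integral of $\mathrm d\vartheta$ contracted with $\delta\iota_t$), so in particular $\chi_t\equiv0$ whenever $\delta\iota_t\equiv0$. Thus $\delta_\iota\mathcal{S}_{[t_1,t_2]}=-\int_{t_1}^{t_2}\int_{\mathcal L}(\iota_t^*\rho_t)\,\frac{\mathrm d}{\mathrm dt}\chi_t\,\mu_t\,\mathrm dt$; moving the time derivative off $\chi_t$ with (\ref{mudot}) (the density $(\iota_t^*\rho_t)\,\mu_t$ being constant in $t$) reduces this to $-\int_{\mathcal L}(\chi_{t_2}-\chi_{t_1})\,(\iota_t^*\rho_t)\,\mu_t$, which vanishes since $\delta\iota_{t_1}=\delta\iota_{t_2}=0$ forces $\chi_{t_1}\equiv\chi_{t_2}\equiv0$. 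Adding the three contributions gives $\delta\mathcal{S}_{[t_1,t_2]}=0$.

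The main obstacle is the computation of $\delta_\iota\mathcal{E}_t$: it is the one place that requires careful bookkeeping of pullbacks of $\vartheta$, their time derivatives, and the interchange of three separate derivatives (the variation parameter, $t$, and the path parameter), and it is unforgiving of sign errors; the structural facts that force the cancellation are Cartan's magic formula, the closedness of the symplectic form, and the Hamiltonian form of (\ref{idot}). The only technical point worth flagging is that the line integrals $\int_p^x$ are well-defined (path-independent) because the relevant pulled-back one-forms are exact, the de Rham class $[\iota_t^*\vartheta]$ being pinned down by the corrected Bohr--Sommerfeld condition and therefore preserved both along the flow and under the admissible variations.
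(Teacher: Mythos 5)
Your proposal is correct and follows essentially the same route as the paper: a direct computation of the first variation, with the on-shell identities $K_t=-\iota_t^*E_t$ (hence $D=0$ and $\partial D/\partial U=\rho_t$ along the solution) and the conservation of $\iota_t^*(\partial D/\partial U)\,\mu_t$ doing all the work. The only organizational difference is that the paper parametrizes the Bohr--Sommerfeld-constrained variations as Hamiltonian vector fields, $\delta\iota_t=X_{\delta g_t}\circ\iota_t$ and $\dot{\iota}_t=X_{h_t}\circ\iota_t$, so that the $\iota$-variation appears as a Poisson-bracket term $\{E_t-h_t,\delta g_t\}$ plus terms proportional to $\frac{\rm{d}}{\rm{d}t}\left(\frac{\partial D}{\partial U}\,\mu_t\right)$, rather than as your total time derivative $-\frac{\rm{d}}{\rm{d}t}\chi_t$ integrated against the conserved density --- the two bookkeepings are related by an integration by parts in $t$.
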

\begin{proof}
We will merely compute an arbitrary variation of $\mathcal{S}_{[t_1,t_2]}$ and then verify that this variation is zero at curves in $(\iota,\phi,\mu)$-space of the specified form. Let $(t,\epsilon)\mapsto(\iota_t^\epsilon,\phi_t^\epsilon,\mu_t^\epsilon)$ be an arbitrary two-parameter curve in $(\iota,\phi,\mu)$-space with fixed endpoints and set $(\iota_t,\phi_t,\mu_t)=(\iota_t^0,\phi_t^0,\mu_t^0)$. The variations $\delta\phi_t=\frac{\rm{d}}{\rm{d}\epsilon}\big|_0 \phi_t^\epsilon$ and $\delta\mu_t=\frac{\rm{d}}{\rm{d}\epsilon}\big|_0\mu_t^\epsilon$ are arbitrary real numbers and densities, respectively. On the other hand, $\delta\iota_t=\frac{\rm{d}}{\rm{d}\epsilon}\big|_0\iota_t^\epsilon$, which is a section of the pullback bundle $\iota_t^*T(T^*Q)$, is constrained by the corrected Bohr-Sommerfeld condition. The corrected Bohr-Sommerfeld condition \cite{geo_asymp,Bates_97} requires that a certain linear combination of the Maslov class of $\iota_t^\epsilon$, $m_t^\epsilon\in H^1(\mathcal{L},\mathbb{R})$, and the Lagrange class $[\iota_t^{\epsilon*}\vartheta]\in H^1(\mathcal{L},\mathbb{R})$ be integral. Because the Maslov class is neccessarily parameter-independent for any parameter-dependent Lagrangian embedding, this condition is equivalent to the requirement that $\iota_t^\epsilon$ always remain within a distinguished connected component of a level set of the mapping $\iota\mapsto[\iota^*\vartheta]$. Therefore, the de Rham class $[\iota_t^\epsilon\vartheta]\in H^1(\mathcal{L})$ must satisfy
\begin{align}
\frac{\rm{d}}{\rm{d}\epsilon}\bigg|_0[\iota_t^\epsilon\vartheta]=0.
\end{align} 
By de Rham's theorem, this last condition is equivalent to $\frac{\rm{d}}{\rm{d}\epsilon}\big|_0\iota_t^{\epsilon*}\vartheta$ being exact. For the same reasons, $\frac{\rm{d}}{\rm{d}t}\big|\iota_t^{*}\vartheta$ must also be exact. The exactness of this pair of differential forms then implies that there must be functions $\delta g_t,h_t:T^*Q\rightarrow\mathbb{R}$ such that
\begin{align}
\delta\iota_t(x)=X_{\delta g_t}(\iota_t(x))\\
\dot{\iota}_t(x)=X_{h_t}(\iota_t(x)).
\end{align}
With this in mind, it is readily seen that the first variation of the action is given by
\begin{align}\label{first_variation}
&\frac{\rm{d}}{\rm{d}\epsilon}\bigg|_0\mathcal{S}_{[t_1,t_2]}(\iota_\cdot^\epsilon,\phi_\cdot^\epsilon,\mu_\cdot^\epsilon)=\\
&\int_{t_1}^{t_2}\int_{\mathcal{L}}\bigg(\iota_t^*\left\{E_t-h_t,\delta g_t\right\}\bigg)\frac{\partial D}{\partial U}\,\mu_t\,dt\nonumber\\
+&\int_{t_1}^{t_2}\int_{\mathcal{L}}\iota_t^*\delta g_t\bigg(\frac{\rm{d}}{\rm{d}t}\bigg(\frac{\partial D}{\partial U}\,\mu_t\bigg)\bigg)\,dt\nonumber\\
+&\int_{t_1}^{t_2}\int_{\mathcal{L}}\bigg(\vartheta_{\iota_t(p)}(\delta\iota_t(p))-\delta g_t(\iota_t(p))-\delta\phi_t\bigg)\frac{\rm{d}}{\rm{d}t}\bigg(\frac{\partial D}{\partial U}\,\mu_t\bigg)\,dt\nonumber\\
+&\int_{t_1}^{t_2}\int_{\mathcal{L}}D\,\delta\mu_t\,dt,\nonumber
\end{align}
where $\partial D/\partial U(x,t)=\frac{\partial D}{\partial U}(\iota_t(x),t,K_t(x))$ and $D(x,t)=D(\iota_t(x),t,K_t(x))$.

 If $(\iota_t,\phi_t,\mu_t)$ satisfies the desired equations, then it is simple to verify that $h_t=E_t+\text{const}$, and $K_t=-\iota_t^*E_t$. Under these conditions, the right-hand-side of Eq.\,(\ref{first_variation}) vanishes.
\end{proof}
\section{KMW dynamics as a Hamiltonian system\label{cauchy_hamiltonian}}
With the variational principle in the previous section, we have formulated KMW dynamics as a gauge theory. We can therefore draw upon the general result \cite{Peierls_1952} (see also Refs.\,\cite{Marsden_1998,Khavkine_2014} for more recent discussions of the idea) that gauge-theoretic dynamical equations always possess a Hamiltonian structure.  In this section, we will apply this result in order to explicitly identify the Hamiltonian structure underlying KMW dynamics. 

Recall that $\mathcal{P}_o$ denotes $(\iota,\phi,\mu)$-space. Associated with the KMWP equations of motion on $\mathcal{P}_o$, is the time-advance map $\mathcal{F}_{t,t_o}:\mathcal{P}_o\rightarrow\mathcal{P}_o$. By definition, if $(\iota_t,\phi_t,\mu_t)$ is a solution of the KMWP equations of motion, then $\mathcal{F}_{t,t_o}(\iota_{t_o},\phi_{t_o},\mu_{t_o})=(\iota_t,\phi_t,\mu_{t})$. This mapping satisfies the so-called time-dependent flow property \cite{FoM},
\begin{align}
\mathcal{F}_{t_3,t_2}\circ\mathcal{F}_{t_2,t_1}=\mathcal{F}_{t_3,t_1}.
\end{align}
It can also be used to define a mapping from $\mathcal{P}_o$, regarded as a space of initial conditions for the KMWP equations of motion, into the space of paths in $\mathcal{P}_o$; an initial condition is mapped into its corresponding solution of Eqs.\,(\ref{idot}), (\ref{phidot}), and (\ref{mudot}). If $\mathcal{C}(\mathcal{P}_o)$ is the set of all conceivable paths $t\mapsto (\iota_t,\phi_t,\mu_t)$, then we denote this mapping $\mathcal{F}_{t_o}:\mathcal{P}_o\rightarrow\mathcal{C}(\mathcal{P}_o)$ (note that there is only one subscript, and therefore no possibility of confusion with $\mathcal{F}_{t,t_o}$). By definition, $\mathcal{F}_{t_o}(\iota,\phi,\mu)$ is the path given by
\begin{align}
\mathcal{F}_{t_o}(\iota,\phi,\mu)(t)=\mathcal{F}_{t,t_o}(\iota,\phi,\mu).
\end{align}

By composing the map $\mathcal{F}_{t_o}$ with the action $\mathcal{S}_{[t_o,t]}$ from the previous section, we obtain the restricted action $\hat{\mathcal{S}}_{t,t_o}:\mathcal{P}_o\rightarrow\mathbb{R}$,
\begin{align}
\hat{\mathcal{S}}_{t,t_o}(\iota,\phi,\mu)\equiv\mathcal{S}_{[t_o,t]}\circ\mathcal{F}_{t_o}(\iota,\phi,\mu).
\end{align}
Let us calculate the first exterior derivative of the restricted action. On the one hand, the result must be zero because $D=0$ along a solution of Eqs.\,(\ref{idot}), (\ref{phidot}), and (\ref{mudot}). On the other hand, the restricted action is given as a composition of mappings. Therefore, the exterior derivative can be computed using the chain rule. Indeed, if $(t,\epsilon)\mapsto(\iota_t^\epsilon,\phi_t^\epsilon,\mu_t^\epsilon)$ is an arbitrary two-parameter curve in $\mathcal{P}_o$ (free endpoints), then
\begin{align}\label{free_endpoints}
&\frac{\rm{d}}{\rm{d}\epsilon}\bigg|_0\mathcal{S}_{[t_o,t]}(\iota^\epsilon_\cdot,\phi^\epsilon_\cdot,\mu^\epsilon_\cdot)=\\
&\int_{t_0}^{t}\int_{\mathcal{L}}\bigg(\iota_t^*\left\{E_t-h_t,\delta g_t\right\}\bigg)\frac{\partial D}{\partial U}\,\mu_t\,dt\nonumber\\
+&\int_{t_o}^{t}\int_{\mathcal{L}}\iota_t^*\delta g_t\bigg(\frac{\rm{d}}{\rm{d}t}\bigg(\frac{\partial D}{\partial U}\,\mu_t\bigg)\bigg)\,dt\nonumber\\
+&\int_{t_o}^{t}\int_{\mathcal{L}}\bigg(\vartheta_{\iota_t(p)}(\delta\iota_t(p))-\delta g_t(\iota_t(p))-\delta\phi_t\bigg)\frac{\rm{d}}{\rm{d}t}\bigg(\frac{\partial D}{\partial U}\,\mu_t\bigg)\,dt\nonumber\\
+&\int_{t_o}^{t}\int_{\mathcal{L}}D\,\delta\mu_t\,dt\nonumber\\
+&\int_{\mathcal{L}}\bigg(\delta\phi_t+\delta g_t(\iota_t(p))-\vartheta_{\iota_t(p)}(\delta\iota_t(p))-\iota_t^*\delta g_t\bigg)\frac{\partial D}{\partial U}\mu_t\bigg|^t_{t_o}.\nonumber
\end{align}
Here, as in the previous section, we have
\begin{align}
\delta\iota_t(x)=X_{\delta g_t}(\iota_t(x))\\
\dot{\iota}_t(x)=X_{h_t}(\iota_t(x)).
\end{align}
If $(\iota_t^\epsilon,\phi_t^\epsilon,\mu_t^\epsilon)$ is a solution of KMWP equations of motion, then only the boundary terms on the right-hand-side of Eq.\,(\ref{free_endpoints}) are nonzero. Therefore, if $(\delta\iota,\delta\phi,\delta\mu)$ is a tangent vector at $(\iota,\phi,\mu)$, the chain rule gives
\begin{align}\label{one_form_conservation}
\mathbf{d}\hat{\mathcal{S}}_{t,t_o}(\delta\iota,\delta\phi,\delta\mu)&=\frac{\rm{d}}{\rm{d}\epsilon}\bigg|_0\hat{\mathcal{S}}_{t,t_o}(\iota^\epsilon,\phi^\epsilon,\mu^\epsilon)\\
&=\frac{\rm{d}}{\rm{d}\epsilon}\bigg|_0\mathcal{S}_{[t_o,t]}(\mathcal{F}_{t,t_o}(\iota^\epsilon,\phi^\epsilon,\mu^\epsilon))\nonumber\\
&=(\mathcal{F}_{t,t_o}^*\Theta_t-\Theta_{t_o})(\delta\iota,\delta\phi,\delta\mu),\nonumber
\end{align}
where $\Theta_t$ is the time-dependent one-form on $\mathcal{P}_o$ given by
\begin{align}\label{pre_one_form}
\Theta_t(\delta\iota,\delta\phi,\delta\mu)=&-\int_{\mathcal{L}}\iota^*(\delta g\,\rho_t)\,\mu\\
-&p_{\phi,t}(\iota,\mu)\left(\vartheta_{\iota(p)}(\delta\iota(p))-\delta g(\iota(p))-\delta\phi\right);\nonumber
\end{align}
$\rho_t:T^*Q\rightarrow\mathbb{R}$ is given by $\rho_t(z)=\frac{\partial D}{\partial U}(z,t,-E_t(z))$; and 
\begin{align}
p_{\phi,t}(\iota,\mu)=\int_{\mathcal{L}}(\iota^*\rho_t)\,\mu.
\end{align}

Because $\hat{\mathcal{S}}_{t,t_o}=0$, Eq.\,(\ref{one_form_conservation}) implies that the one-form $\Theta_t$ is ``frozen-in" to the flow on $\mathcal{P}_o$,
\begin{align}
\mathcal{F}_{t,t_o}^*\Theta_t=\Theta_{t_o}.
\end{align}
This infinite-dimensional frozen-in law is very nearly a statement of the fact that the dynamical equations for the potentials are a Hamiltonian system. Indeed, by taking the exterior derivative of the frozen-in law, we obtain the conservation law
\begin{align}\label{proto_symplecticity}
\mathcal{F}_{t,t_o}^*\mathbf{d}\Theta_t=\mathbf{d}\Theta_{t_o}.
\end{align}
Moreover, by taking the time derivative of the frozen-in law, we obtain an equation reminiscent of the time-dependent Hamilton's equations,
\begin{align}\label{proto_hamilton}
\text{i}_{X_t}\mathbf{d}\Theta_t=-\mathbf{d}\mathcal{E}_t-\dot{\Theta}_t,
\end{align}
where $X_t$ is the vector field on $\mathcal{P}_o$ defined by Eqs.\,(\ref{idot}), (\ref{phidot}), and (\ref{mudot}), and $\mathcal{E}_t:\mathcal{P}_o\rightarrow\mathbb{R}$ is the energy functional
\begin{align}
\mathcal{E}_t(\iota,\phi,\mu)=\Theta_t(X_t)(\iota,\phi,\mu)=-\int_{\mathcal{L}}\iota^*(E_t\,\rho_t)\mu.
\end{align}

In fact, the two-form $-\mathbf{d}\Theta_t$ is not a symplectic form because it is degenerate. Therefore Eq.\,(\ref{proto_hamilton}) is not an instance of Hamilton's equations. However, the tangent vectors that annihiliate $-\mathbf{d}\Theta_t$ are precisely the infinitesimal generators of time-independent gauge transformations. Moreover, $-\mathbf{d}\Theta_t$ is preserved by time-independent gauge transformations. This pair of properties is shared by $\mathbf{d}\mathcal{E}_t$, $\Theta_t$, and $\dot{\Theta}_t$ as well. It follows that the two-form $-\mathbf{d}\Theta_t$, the one-forms $\Theta_t,\dot{\Theta}_t$, and the energy functional $\mathcal{E}_t$ induce corresponding objects on the quotient space $\mathcal{P}=\mathcal{P}_o/\text{Diff}_o(\mathcal{L})$. Let $\Pi:\mathcal{P}_o\rightarrow\mathcal{P}$ be the projection map that sends the potentials $(\iota,\phi,\mu)$ to the corresponding orbit $[(\iota,\phi,\mu)]$. There is a unique one-form on $\mathcal{P}$, $\theta_t$, that satisfies
\begin{align}\label{def_reduced}
\Pi^*\theta_t&=\Theta_t\\
\Pi^*\dot{\theta}_t&=\dot{\Theta}_t.
\end{align}
Likewise, there is a unique functional on $\mathcal{P}$, $e_t$, that satisfies
\begin{align}
\Pi^*e_t=\mathcal{E}_t.
\end{align}
By Eq.\,(\ref{proto_hamilton}), these quantities are related by
\begin{align}\label{hamilton}
\text{i}_{x_t}\mathbf{d}\theta_t=-\mathbf{d}e_t-\dot{\theta}_t,
\end{align}
where $x_t$ is the dynamical vector field on the KMW phase induced by KMWP equations of motion. With Eq.\,(\ref{hamilton}) we have formally succeeded in realizing KMW dynamics as a time-dependent Hamiltonian system. 

Explicit expressions for the presymplectic form $-\mathbf{d}\Theta_t$, the one-form $\theta_t$, the symplectic form $-\mathbf{d}\theta_t$, and the Poisson brackets on $\mathcal{P}$ (defined by the symplectic form $-\mathbf{d}\theta_t$) are given in the following theorems. 

\begin{thm}\label{presymplectic_form}
The KMW presymplectic form $-\mathbf{d}\Theta_t$ on the space of potentials, $\mathcal{P}_o$, is given by
\begin{align}
&(\mathbf{d}\Theta_t)(X_1,X_2)=\\
&\int_{\mathcal{L}}\iota^*(\left\{\delta g_1,\delta g_2\right\}\rho_t+\delta g_1\left\{\rho_t,\delta g_2\right\}-\delta g_2\left\{\rho_t,\delta g_1\right\})\,\mu\nonumber\\
+&\int_{\mathcal{L}}\iota^*(\delta g_1\,\rho_t)\delta\mu_2-\iota^*(\delta g_2\,\rho_t)\delta\mu_1\nonumber\\
+&L_{X_1}p_{\phi,t}\bigg(\delta\phi_2-\ell_{\delta g_2}(\iota(p))\bigg)\nonumber\\
-&L_{X_2}p_{\phi,t}\bigg(\delta\phi_1-\ell_{\delta g_1}(\iota(p))\bigg),\nonumber
\end{align}
where $X_i(\iota,\phi,\mu)=(X_{\delta g_i}\circ\iota,\delta\phi_i,\delta\mu_i)$ for $i=1,2$; $\ell_f=\vartheta(X_f)-f$; and
\begin{align}
L_{X_i}p_{\phi,t}=&\int_{\mathcal{L}}\iota^*(\{\rho_t,\delta g_i\})\,\mu+\iota^*(\rho_t)\delta\mu_i.
\end{align}
\end{thm}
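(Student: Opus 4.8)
The plan is to compute $\mathbf{d}\Theta_t(X_1,X_2)$ directly from the intrinsic formula $\mathbf{d}\Theta_t(X_1,X_2)=X_1\big(\Theta_t(X_2)\big)-X_2\big(\Theta_t(X_1)\big)-\Theta_t([X_1,X_2])$, after extending the pointwise data $X_i$ to honest vector fields on $\mathcal{P}_o$. A convenient extension is obtained by fixing functions $\delta g_1,\delta g_2$ on $T^*Q$, constants $\delta\phi_1,\delta\phi_2$, and densities $\delta\mu_1,\delta\mu_2$ on $\mathcal{L}$, and setting $X_i(\iota,\phi,\mu)=(X_{\delta g_i}\circ\iota,\delta\phi_i,\delta\mu_i)$. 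These really are tangent to $\mathcal{P}_o$: the Hamiltonian isotopy generated by $X_{\delta g_i}$ fixes the Maslov class and the Lagrange class, hence preserves the corrected Bohr--Sommerfeld condition, and $\mathrm{den}(\mathcal{L})$ is an open cone so constant density increments are admissible. One first checks that $\Theta_t$, and therefore $\mathbf{d}\Theta_t$, does not depend on the choice of generating function $\delta g_i$ representing $\delta\iota_i$, since shifting $\delta g_i$ by a constant shifts the two terms of Eq.\,(\ref{pre_one_form}) by opposite amounts, using $\int_{\mathcal{L}}(\iota^*\rho_t)\mu=p_{\phi,t}$. Hence the computation is legitimate.

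First I would evaluate $\Theta_t(X_2)$ as a function on $\mathcal{P}_o$. With $\ell_f=\vartheta(X_f)-f$, Eq.\,(\ref{pre_one_form}) becomes $\Theta_t(X_2)=-\int_{\mathcal{L}}\iota^*(\delta g_2\,\rho_t)\,\mu-p_{\phi,t}(\iota,\mu)\big(\ell_{\delta g_2}(\iota(p))-\delta\phi_2\big)$. Differentiating along $X_1$ is then routine: because the $\iota$-component of the flow of $X_1$ is postcomposition with the Hamiltonian flow of $\delta g_1$, one has $\frac{\rm{d}}{\rm{d}\epsilon}\big|_0\iota_\epsilon^*f=\iota^*\{f,\delta g_1\}$ for functions $f$ on $T^*Q$; applied to $f=\delta g_2\rho_t$ and expanded by the Leibniz rule $\{\delta g_2\rho_t,\delta g_1\}=\{\delta g_2,\delta g_1\}\rho_t+\delta g_2\{\rho_t,\delta g_1\}$, together with the $\mu$-variation, the identification $\frac{\rm{d}}{\rm{d}\epsilon}\big|_0 p_{\phi,t}=L_{X_1}p_{\phi,t}$ (the formula in the statement, obtained at once from $p_{\phi,t}=\int\iota^*\rho_t\,\mu$), and $\frac{\rm{d}}{\rm{d}\epsilon}\big|_0\ell_{\delta g_2}(\iota_\epsilon(p))=(X_{\delta g_1}\ell_{\delta g_2})(\iota(p))$, this produces $X_1(\Theta_t(X_2))$ as an explicit sum; $X_2(\Theta_t(X_1))$ follows by interchanging $1\leftrightarrow2$.

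Next I would compute $[X_1,X_2]$. Its $\phi$- and $\mu$-components vanish, because $X_i$ acts on those factors by the commuting translations $\phi\mapsto\phi+\epsilon\delta\phi_i$, $\mu\mapsto\mu+\epsilon\delta\mu_i$; its $\iota$-component is the commutator of the two postcomposition flows, which equals $[X_{\delta g_1},X_{\delta g_2}]\circ\iota=-X_{\{\delta g_1,\delta g_2\}}\circ\iota$ by the standard bracket identity for Hamiltonian vector fields on $T^*Q$. Thus $[X_1,X_2]$ is represented by the generating function $-\{\delta g_1,\delta g_2\}$ and $\Theta_t([X_1,X_2])=\int_{\mathcal{L}}\iota^*(\{\delta g_1,\delta g_2\}\rho_t)\,\mu+p_{\phi,t}\,\ell_{\{\delta g_1,\delta g_2\}}(\iota(p))$. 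Substituting the three pieces into the intrinsic formula and collecting, the $\{\delta g_1,\delta g_2\}\rho_t$ terms enter with coefficient $+1$ from each of $X_1(\Theta_t(X_2))$ and $-X_2(\Theta_t(X_1))$ and with coefficient $-1$ from $-\Theta_t([X_1,X_2])$, for net coefficient $+1$; the $\mu$-variations collect into the second line of the stated formula; and the $(L_{X_i}p_{\phi,t})$ terms acquire the factors $\delta\phi_j-\ell_{\delta g_j}(\iota(p))$ of the last two lines. What remains is the combination $-p_{\phi,t}\big((X_{\delta g_1}\ell_{\delta g_2})(\iota(p))-(X_{\delta g_2}\ell_{\delta g_1})(\iota(p))+\ell_{\{\delta g_1,\delta g_2\}}(\iota(p))\big)$, which vanishes by the identity $X_{\delta g_1}\ell_{\delta g_2}-X_{\delta g_2}\ell_{\delta g_1}=-\ell_{\{\delta g_1,\delta g_2\}}$; this identity follows from $L_{X_f}\vartheta=\mathbf{d}\ell_f$, or equivalently from a one-line check in canonical coordinates using $\vartheta(X_f)=p_i\,\partial_{p_i}f$.

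I expect the only genuine difficulty to be sign bookkeeping, in two places: correctly reading off that the $\iota$-component of $[X_1,X_2]$ on the embedding space is generated by $-\{\delta g_1,\delta g_2\}$ (which hinges on the convention-dependent sign in $[X_{\delta g_1},X_{\delta g_2}]=-X_{\{\delta g_1,\delta g_2\}}$ and on whether postcomposition is a left or right action), and verifying the base-point cancellation identity $X_{\delta g_1}\ell_{\delta g_2}-X_{\delta g_2}\ell_{\delta g_1}=-\ell_{\{\delta g_1,\delta g_2\}}$. The fact that the second term of $\Theta_t$ is concentrated at the distinguished point $p\in\mathcal{L}$ is what makes the computation slightly less mechanical than a standard presymplectic-form calculation; everything else reduces to the Leibniz rule for the Poisson bracket and careful differentiation of pullbacks under $\iota$.
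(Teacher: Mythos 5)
Your proposal is correct and follows exactly the route of the paper's own proof, which consists of the single remark that the result ``can be computed directly from Eq.~(\ref{pre_one_form})'' using the identity $(\mathbf{d}\Theta_t)(X_1,X_2)=L_{X_1}(\Theta_t(X_2))-L_{X_2}(\Theta_t(X_1))-\Theta_t([X_1,X_2])$. You simply supply the details the paper leaves implicit --- the extension of the $X_i$ to genuine vector fields on $\mathcal{P}_o$, the sign of the generating function of $[X_1,X_2]$, and the cancellation of the base-point terms via $X_{\delta g_1}\ell_{\delta g_2}-X_{\delta g_2}\ell_{\delta g_1}=-\ell_{\{\delta g_1,\delta g_2\}}$ --- and these all check out.
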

\begin{proof}
Using the exterior calculus identity
\begin{align}
(\mathbf{d}\Theta_t)(X_1,X_2)=L_{X_1}(\Theta_t(X_2))&-L_{X_2}(\Theta_t(X_1))\\
-\Theta_t(&[X_1,X_2]),\nonumber
\end{align}
the result can be computed directly from Eq.\,(\ref{pre_one_form}).
\end{proof} 

\begin{thm}\label{tangent}
(Weinstein)  If $(\Lambda,\tilde{\phi},\mu)\in\mathcal{P}$, there is an isomorphism
\begin{align}
T_{(\Lambda,\tilde{\phi},\mu)}\mathcal{P}\approx\bigg(C^\infty(T^*Q)\times\rm{den}(\Lambda)\bigg)/W.
\end{align}
Here $W$ is the subspace of $C^\infty(T^*Q)\times\rm{den}(\Lambda)$ consisting of elements of the form
\begin{align}
(f,L_{X_f}\mu),
\end{align}
where $f$ is an arbitrary function on $T^*Q$ that is zero along $\Lambda$ and $\iota_\Lambda:\Lambda\rightarrow T^*Q$ is the canonical inclusion.
\end{thm}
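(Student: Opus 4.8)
\noindent\textit{Proof sketch (proposal).} The plan is to construct the isomorphism by hand, working in the space of potentials $\mathcal{P}_o$ and then descending to the quotient $\mathcal{P}=\mathcal{P}_o/\mathrm{Diff}(\mathcal{L})_o$. Fix potentials $(\iota,\phi,\hat\mu)\in\mathcal{P}_o$ representing the point $(\Lambda,\tilde\phi,\mu)$; thus $\iota(\mathcal{L})=\Lambda$, $\iota_*\hat\mu=\mu$, and $\tilde\phi(l)=\phi+\int_{\iota(p)}^l\iota_\Lambda^*\vartheta$. Since $\mathrm{Diff}(\mathcal{L})_o$ acts freely and $\mathcal{P}_o\to\mathcal{P}$ is a principal bundle (Theorem \ref{thm0}), one has $T_{(\Lambda,\tilde\phi,\mu)}\mathcal{P}=T_{(\iota,\phi,\hat\mu)}\mathcal{P}_o/V$, where $V$ is the tangent space to the gauge orbit. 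Differentiating the action formula of Theorem \ref{thm0} along the flow generated by $\xi\in\mathfrak{X}(\mathcal{L})$ gives $V=\{(-\iota_*\xi,\,-(\iota^*\vartheta)_p(\xi(p)),\,-L_\xi\hat\mu):\xi\in\mathfrak{X}(\mathcal{L})\}$, while by the product structure of $\mathcal{P}_o$ together with the computation already carried out in the proof of Theorem \ref{thm1}, $T_{(\iota,\phi,\hat\mu)}\mathcal{P}_o$ is exactly the set of triples $(X_g\circ\iota,\delta\phi,\delta\hat\mu)$ with $g\in C^\infty(T^*Q)$, $\delta\phi\in\mathbb{R}$, $\delta\hat\mu\in\mathrm{den}(\mathcal{L})$ (the Lagrangian/Bohr-Sommerfeld constraint being precisely what forces the first slot to have the form $X_g\circ\iota$).

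Now define $\Psi:C^\infty(T^*Q)\times\mathrm{den}(\Lambda)\to T_{(\Lambda,\tilde\phi,\mu)}\mathcal{P}$ by
\begin{align}
\Psi(f,\nu)=\Big[\big(X_f\circ\iota,\;(\vartheta(X_f)-f)(\iota(p)),\;\iota^*\nu\big)\Big],
\end{align}
the brackets denoting the class modulo $V$ and $\iota^*\nu\in\mathrm{den}(\mathcal{L})$ the pullback of $\nu$ under the diffeomorphism $\iota:\mathcal{L}\to\Lambda$. A direct computation of $D\Phi_\eta$ shows that this class is unchanged when $(\iota,\phi,\hat\mu)$ is replaced by a gauge-equivalent representative, so $\Psi$ is a well-defined linear map. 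It is surjective: given any tangent vector, lift it to $(X_g\circ\iota,\delta\phi,\delta\hat\mu)$ and observe that $\Psi(g+b,\iota_*\delta\hat\mu)$ reproduces it exactly once the constant $b=(\vartheta(X_g)-g)(\iota(p))-\delta\phi$ is chosen to match the middle slot.

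The crux is the identification $\ker\Psi=W$. If $\Psi(f,\nu)=0$ then the triple above equals $(-\iota_*\xi,-(\iota^*\vartheta)_p(\xi(p)),-L_\xi\hat\mu)$ for some $\xi$. The first slot says $X_f|_\Lambda=-\iota_*\xi$ is tangent to $\Lambda$, so $f|_\Lambda$ is locally constant, hence constant ($\Lambda$ being connected); write $f=a+f_0$ with $f_0|_\Lambda=0$, whence $X_f=X_{f_0}$ and $\xi=-\iota^*(X_{f_0}|_\Lambda)$. The third slot then gives $\nu=\iota_*\big(L_{\iota^*(X_{f_0}|_\Lambda)}\hat\mu\big)=L_{X_{f_0}}\mu$. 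Finally the middle slot reads $\vartheta(X_{f_0})(\iota(p))-a=\vartheta(X_{f_0})(\iota(p))$, forcing $a=0$; thus $(f,\nu)=(f_0,L_{X_{f_0}}\mu)\in W$. The reverse inclusion $W\subseteq\ker\Psi$ follows by retracing the same identities with $\xi=-\iota^*(X_f|_\Lambda)$. Hence $\Psi$ descends to an isomorphism $\big(C^\infty(T^*Q)\times\mathrm{den}(\Lambda)\big)/W\xrightarrow{\ \sim\ }T_{(\Lambda,\tilde\phi,\mu)}\mathcal{P}$.

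I expect the main obstacle to be the bookkeeping built around the distinguished point $p$: it is precisely the $\phi$-component of the gauge orbit that distinguishes the correct $W=\{(f,L_{X_f}\mu):f|_\Lambda=0\}$ from the naive guess in which $f|_\Lambda$ is merely locally constant --- i.e., the overall phase $\tilde\phi$ ``sees'' the constant part of $f|_\Lambda$ to which $\Lambda$ and $\mu$ are blind. Keeping the signs and the push-forward/pull-back conventions for densities consistent (so that, for instance, $\iota_*L_{\iota^*(X_{f_0}|_\Lambda)}\hat\mu=L_{X_{f_0}}\mu$ and the orbit directions come out as above) is the other place requiring care, but these steps are routine once the framework is in place.
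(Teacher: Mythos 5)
Your proof is correct, and it reaches the isomorphism by a genuinely different (though closely parallel) route. The paper works directly downstairs in $\mathcal{P}$, adapting Weinstein's Lemma 4.1: a curve $(\Lambda_t,\tilde{\phi}_t,\mu_t)$ tangent to $v$ has its isodrastic deformation $\Lambda_t$ realized as $F_t(\Lambda)$ for the flow of a global Hamiltonian $h$, the pullbacks $f_t^*\tilde{\phi}_t$ and $f_t^*\mu_t$ are differentiated to produce $\delta\phi$ and $\delta\mu$, the ambiguity $h\mapsto h+c$ with $c$ constant along $\Lambda$ is tracked, and the constant part is absorbed by setting $\delta g=h-\delta\phi$; the verification that $v\mapsto(\delta g,\delta\mu)+W$ is an isomorphism is then left as ``straightforward.'' You instead work upstairs in $\mathcal{P}_o$, use the principal-bundle structure of Theorem \ref{thm0} to present $T\mathcal{P}$ as $T\mathcal{P}_o$ modulo the explicitly computed gauge-orbit directions, and construct the map in the inverse direction, checking well-definedness, surjectivity, and the kernel. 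Both arguments rest on the same two geometric inputs --- deformations within a fixed Lagrange class are generated by Hamiltonian vector fields, and the phase (through the distinguished point $p$) detects the constant part of $f|_\Lambda$ to which $\Lambda$ and $\mu$ are blind --- but your kernel computation makes fully explicit the step the paper glosses over, namely that the middle slot forces $a=0$ so that $W$ consists of $f$ \emph{vanishing}, not merely constant, on $\Lambda$. Two minor cautions: your $V$ (the vertical distribution in $T\mathcal{P}_o$) collides notationally with the $V$ of Theorem \ref{cotangent}, and the passage from $f|_\Lambda$ locally constant to $f|_\Lambda$ constant uses the connectedness of $\Lambda$ (which holds because $\mathcal{L}$ is assumed connected) and deserves an explicit mention.
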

\begin{proof}
This theorem is a minor extension of Lemma 4.1 in Ref.\,\cite{Weinstein_1990}. The isomorphism is given explicitly as follows. Let $v\in T_{(\Lambda,\tilde{\phi},\mu)}\mathcal{P}$ and choose a curve in $\mathcal{P}$, $(\Lambda_t,\tilde{\phi}_t,\mu_t)$, that is tangent to $v$ at $t=0$. Because $\Lambda_t$ must satisfy the corrected Bohr-Sommerfeld condition for each $t$, there must be some Hamiltonian function $h\in C^\infty(T^*Q)$ with Hamiltonian flow $F_t$ such that $\Lambda_t=F_t(\Lambda)$. Set $f_t=F_t|\Lambda:\Lambda\rightarrow\Lambda_t$. We can therefore define the quantities
\begin{align}
\delta\phi&=\frac{\rm{d}}{\rm{d}t}\bigg|_0 f^*_t\tilde{\phi}_t-\iota_\Lambda^*\ell_{\delta g}\\
\delta\mu&=\frac{\rm{d}}{\rm{d}t}\bigg|_0 f_t^*\mu_t.
\end{align}
Note that $\delta\phi$ is constant. 

Any $h^\prime\in C^\infty(T^*Q)$ whose Hamiltonian flow $F^\prime_t$ also satisfies $\Lambda_t=F^\prime_t(\Lambda)$ must differ from $h$ by a function $c\in C^\infty(T^*Q)$ that is constant along $\Lambda$, $\delta g^\prime=\delta g+c$. Thus,
\begin{align}
\delta\phi^\prime&=\delta\phi+L_{X_c}\tilde{\phi}-\iota_\Lambda^*\ell_c\label{deltaphi}\\
\delta\mu^\prime&=\delta\mu+L_{X_c}\mu.\label{deltamu}
\end{align}
Note that the Lie derivatives in these expressions are well-defined; because $c$ is constant along $\Lambda$, $X_c$ is tangent to $\Lambda$ (See Ref.\,\cite{FoM}, Ch.\,5). Set $\delta g=h-\delta\phi$. The pair $(\delta g,\delta\mu)$ specifies a unique element of $\bigg(C^\infty(T^*Q)\times\text{den}(\Lambda)\bigg)/W$. Using Eqs.\,(\ref{deltaphi}) and (\ref{deltamu}), It is not difficult to see that this element only depends on the tangency class of the curve $(\Lambda_t,\tilde{\phi}_t,\mu_t)$. We have therefore constructed a linear map $T_{(\Lambda,\tilde{\phi},\mu)}\mathcal{P}\rightarrow\bigg(C^\infty(T^*Q)\times\text{den}(\Lambda)\bigg)/W$ given by
\begin{align}
v\mapsto(\delta g,\delta\mu)+W.
\end{align}
It is straightforward to verify that this map is an isomorphism.
\end{proof}

\begin{thm}\label{cotangent}
If $(\Lambda,\tilde{\phi},\mu)\in\mathcal{P}$, there is an isomorphism 
\begin{align}
T^*_{(\Lambda,\tilde{\phi},\mu)}\mathcal{P}\approx\bigg(\mathrm{den}(\Lambda)\times C^\infty(T^*Q)\bigg)/V.
\end{align}
Here $V$ is the subset of $\mathrm{den}(\Lambda)\times C^\infty(T^*Q)$ consisting of elements of the form
\begin{align}
(L_{X_f}\mu,-f),
\end{align}
where $f$ is an arbitrary smooth function on $T^*Q$ that is zero along $\Lambda$. The pairing between the covector $\alpha=(\kappa,\tilde{p}_\mu)+V$ and the tangent vector $v=(\delta g,\delta\mu)+W$ is given by
\begin{align}\label{pairing}
\left<\alpha,v\right>&=\int_\Lambda\,\iota_\Lambda^*\delta g\,\kappa\\
&+\int_\Lambda\iota_\Lambda^*(\left\{\tilde{p}_\mu,\delta g\right\})\,\mu+\int_\Lambda \iota_\Lambda^*\tilde{p}_\mu\,\delta\mu.\nonumber
\end{align}
\end{thm}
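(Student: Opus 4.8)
## Proof proposal for Theorem~\ref{cotangent}

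The plan is to build the isomorphism directly from the description of the tangent space in Theorem~\ref{tangent} and then dualize. From Theorem~\ref{tangent} we know $T_{(\Lambda,\tilde{\phi},\mu)}\mathcal{P}\approx(C^\infty(T^*Q)\times\mathrm{den}(\Lambda))/W$, so the algebraic dual of the tangent space is the annihilator of $W$ inside $(C^\infty(T^*Q)\times\mathrm{den}(\Lambda))^*$. The first step is therefore to write down a candidate bilinear pairing between $\mathrm{den}(\Lambda)\times C^\infty(T^*Q)$ and $C^\infty(T^*Q)\times\mathrm{den}(\Lambda)$ --- namely the right-hand side of Eq.~(\ref{pairing}), with $(\kappa,\tilde{p}_\mu)$ paired against $(\delta g,\delta\mu)$ --- and check that it is well-defined on the relevant quotients. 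Concretely I would: (i) verify the pairing kills $W$ in the second slot, i.e.\ that substituting $(\delta g,\delta\mu)=(f,L_{X_f}\mu)$ with $f|\Lambda=0$ gives zero; (ii) verify it kills $V$ in the first slot, i.e.\ that substituting $(\kappa,\tilde{p}_\mu)=(L_{X_g}\mu,-g)$ with $g|\Lambda=0$ also gives zero; (iii) conclude the pairing descends to a map $(\mathrm{den}(\Lambda)\times C^\infty(T^*Q))/V\to(T_{(\Lambda,\tilde{\phi},\mu)}\mathcal{P})^*$; (iv) show this map is a bijection onto $T^*_{(\Lambda,\tilde{\phi},\mu)}\mathcal{P}$ by exhibiting an inverse or by a dimension/nondegeneracy argument.

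For step~(i), note that when $f|\Lambda=0$ the term $\int_\Lambda\iota_\Lambda^*\tilde{p}_\mu\,\delta\mu=\int_\Lambda\iota_\Lambda^*\tilde{p}_\mu\,L_{X_f}\mu$, and the term $\int_\Lambda\iota_\Lambda^*(\{\tilde{p}_\mu,\delta g\})\,\mu=\int_\Lambda\iota_\Lambda^*(\{\tilde{p}_\mu,f\})\,\mu$. Since $X_f$ is tangent to $\Lambda$ (because $f$ vanishes on the Lagrangian $\Lambda$, so $X_f$ annihilates $df$ which cuts out $T\Lambda$; cf.\ the remark in the proof of Theorem~\ref{tangent}), the restriction $\iota_\Lambda^*\{\tilde{p}_\mu,f\}=\iota_\Lambda^*(L_{X_f}\tilde{p}_\mu)=L_{X_f|\Lambda}(\iota_\Lambda^*\tilde{p}_\mu)$, and an integration by parts against the density $\mu$ turns $\int_\Lambda (L_{X_f|\Lambda}\iota_\Lambda^*\tilde{p}_\mu)\,\mu$ into $-\int_\Lambda\iota_\Lambda^*\tilde{p}_\mu\,L_{X_f|\Lambda}\mu$, which cancels the third term. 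The first term $\int_\Lambda\iota_\Lambda^*f\,\kappa$ vanishes outright because $\iota_\Lambda^*f=0$. So the pairing is independent of the $W$-representative. Step~(ii) is the mirror image: with $\kappa=L_{X_g}\mu$ and $\tilde{p}_\mu=-g$, $g|\Lambda=0$, the first term becomes $\int_\Lambda\iota_\Lambda^*\delta g\,L_{X_g}\mu$, the second becomes $-\int_\Lambda\iota_\Lambda^*\{g,\delta g\}\,\mu=+\int_\Lambda\iota_\Lambda^*\{\delta g,g\}\,\mu$, and again using that $X_g$ is tangent to $\Lambda$ and integrating by parts these cancel, while the third term $\int_\Lambda\iota_\Lambda^*(-g)\,\delta\mu=0$ since $g|\Lambda=0$. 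Hence the pairing descends to $(\mathrm{den}(\Lambda)\times C^\infty(T^*Q))/V$.

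It remains (step~(iv)) to see that the induced linear map $(\mathrm{den}(\Lambda)\times C^\infty(T^*Q))/V\to T^*_{(\Lambda,\tilde{\phi},\mu)}\mathcal{P}$ is an isomorphism. Injectivity: if $(\kappa,\tilde{p}_\mu)$ pairs to zero against every $(\delta g,\delta\mu)$, then taking $\delta\mu$ arbitrary and $\delta g=0$ forces $\iota_\Lambda^*\tilde{p}_\mu=0$, i.e.\ $\tilde{p}_\mu$ vanishes on $\Lambda$; then taking $\delta g$ arbitrary and $\delta\mu$ adjusted, the remaining conditions force $(\kappa,\tilde{p}_\mu)$ to lie in $V$ (writing $\tilde{p}_\mu=-g$ with $g|\Lambda=0$, the vanishing of $\int_\Lambda\iota_\Lambda^*\delta g\,(\kappa - L_{X_g}\mu)$ for all $\delta g$ gives $\kappa=L_{X_g}\mu$ modulo densities that are $L_2$-orthogonal to all $\iota_\Lambda^*\delta g$; since every density on $\Lambda$ arises this way, $\kappa=L_{X_g}\mu$). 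Surjectivity onto the topological dual follows because any continuous functional on $(C^\infty(T^*Q)\times\mathrm{den}(\Lambda))/W$ pulls back to a functional on $C^\infty(T^*Q)\times\mathrm{den}(\Lambda)$ annihilating $W$, and such functionals are represented by pairs $(\kappa,\tilde{p}_\mu)$ via the nondegenerate $L_2$ pairings of densities against functions on $\Lambda$ and of functions on $T^*Q$ against compactly supported densities --- i.e.\ the map is onto by the very form of the annihilator of $W$.

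The main obstacle I expect is the functional-analytic bookkeeping in step~(iv): since $\mathcal{P}$ is infinite-dimensional, ``the'' cotangent space must be taken to mean a chosen smooth dual (e.g.\ the regular dual represented by densities and smooth functions, not the full topological dual), and one must check that every element of this chosen dual is represented by some $(\kappa,\tilde{p}_\mu)$ modulo $V$ --- in particular that the quotient by $V$ exactly matches the annihilator of $W$, which amounts to the statement that $V=W^{\perp}$ under the pairing. This is really a linear-algebra fact about the subspaces cut out by ``functions vanishing on $\Lambda$,'' and the cleanest route is to prove $V\subseteq W^\perp$ (steps (i)--(ii) above) and then $W^\perp\subseteq V$ by the injectivity argument, so that the pairing identifies $(\mathrm{den}(\Lambda)\times C^\infty(T^*Q))/V$ with $W^\perp=(T_{(\Lambda,\tilde{\phi},\mu)}\mathcal{P})^*$ in the smooth-dual sense. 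I would state this identification of dual as the working definition of $T^*\mathcal{P}$ and remark that the construction is $\mathrm{Diff}(\mathcal{L})_o$-equivariant, so it indeed lives on the quotient $\mathcal{P}$ rather than merely on $\mathcal{P}_o$.
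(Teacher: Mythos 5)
Your proposal is correct and follows essentially the same route as the paper: exhibit the pairing, check by integration by parts (using that $X_f$ is tangent to $\Lambda$ when $f|_\Lambda=0$) that it descends to the quotients by $V$ and $W$, and then establish weak nondegeneracy in each slot, which is what identifies $\big(\mathrm{den}(\Lambda)\times C^\infty(T^*Q)\big)/V$ with the (weak/smooth) dual of the tangent space. You simply carry out explicitly the computations the paper leaves as "a simple exercise in integration-by-parts" and "a very similar argument," and you are more candid about the infinite-dimensional caveat that $T^*\mathcal{P}$ must be read as the regular dual defined by this weakly nondegenerate pairing.
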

\begin{proof}
We will show that Eq.\,(\ref{pairing}) defines a weakly nondegenerate pairing. First observe that Eq.\,(\ref{pairing}) defines a bilinear mapping $\bigg(\mathrm{den}(\Lambda)\times C^\infty(T^*Q)\bigg)/V\times\bigg(C^\infty(T^*Q)\times\mathrm{den}(\Lambda)\bigg)/W\rightarrow\mathbb{R}$. To see this, it is only necessary to verify that the right-hand-side of Eq.\,(\ref{pairing}) is independent of the representatives of $\alpha$ and $v$. This is a simple exercise in integration-by-parts. 

To see that this bilinear map is weakly nondegenerate, fix a representative of $v$, $(\delta g,\delta \mu)$, and suppose that $\left<\alpha,v\right>=0$ for each $\alpha\in\bigg(\mathrm{den}(\Lambda)\times C^\infty(T^*Q)\bigg)/V$. Then $\left<\alpha,v\right>=0$ for all $\alpha$ of the form $\alpha=(\kappa,0)+V$. Because $\kappa$ is an arbitrary density on $\Lambda$,
\begin{align}
\iota_\Lambda^*\delta g=0,
\end{align} 
i.e. $\delta g=f$, where $f$ is some function that vanishes along $\Lambda$. This implies that, for arbitrary $\alpha$,
\begin{align}
\left<\alpha,v\right>=\int_\Lambda(\iota_\Lambda^*\tilde{p}_\mu)(\delta\mu-L_{X_f}\mu).
\end{align}
Therefore, because $\tilde{p}_\mu$ is arbitrary, 
\begin{align}
\delta\mu=L_{X_f}\mu.
\end{align}
This proves $v=(\delta g,\delta\mu)+W=(f,L_{X_f}\mu)+W=0+W$.

A very similar argument shows that if $\left<\alpha,v\right>=0$ for each $v$, then $\alpha$ must be zero.
\end{proof}

\begin{thm}\label{thm2}
The time-dependent KMW symplectic form, $-\mathbf{d}\theta_t$ on $\mathcal{P}$ is given by
\begin{align}
&\mathbf{d}\theta_t(X_1,X_2)=\\
&\int_{\Lambda}\iota_\Lambda^*(\left\{\delta g_1,\delta g_2\right\}\rho_t)\,\mu\nonumber\\
+&\int_{\Lambda}\iota_\Lambda^*(\delta g_1\left\{\rho_t,\delta g_2\right\}-\delta g_2\left\{\rho_t,\delta g_1\right\})\,\mu\nonumber\\
+&\int_{\Lambda}\iota_\Lambda^*\bigg(\delta g_1\,\rho_t\bigg)\delta\mu_2-\iota_\Lambda^*\bigg(\delta g_2\,\rho_t\bigg)\delta\mu_1,\nonumber
\end{align}
where $X_1,X_2$ are arbitrary tangent vectors at $(\Lambda,\tilde{\phi},\mu)\in\mathcal{P}$,
\begin{align}
X_i=(\delta g_i,\delta\mu_i)+W.
\end{align}
\end{thm}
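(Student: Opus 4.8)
The plan is to obtain $\mathbf{d}\theta_t$ by pushing the presymplectic form of Theorem \ref{presymplectic_form} down to $\mathcal{P}$. Since $\theta_t$ is characterized by $\Pi^{*}\theta_t=\Theta_t$, we have $\Pi^{*}(\mathbf{d}\theta_t)=\mathbf{d}\Theta_t$, and since $\mathbf{d}\Theta_t$ is basic for $\Pi$ --- it annihilates the generators of time-independent gauge transformations and is invariant under them, as recorded in Section \ref{cauchy_hamiltonian} --- the value $\mathbf{d}\theta_t(X_1,X_2)$ at a point $(\Lambda,\tilde\phi,\mu)\in\mathcal{P}$ is computed by choosing \emph{any} lifts $\tilde X_1,\tilde X_2\in T\mathcal{P}_o$ of $X_1,X_2$ and inserting them into the formula of Theorem \ref{presymplectic_form}; the outcome is automatically independent of the lifts and of the choice of representatives $(\delta g_i,\delta\mu_i)$ of the $X_i$ modulo $W$.

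The substantive step is the dictionary between the two parametrizations of a tangent vector. As in the proof of Theorem \ref{tangent}, a lift of $X_i=(\delta g_i,\delta\mu_i)+W$ is a triple $(X_{h_i}\circ\iota,\delta\phi_i,\nu_i)$ on $\mathcal{P}_o$ whose generating Hamiltonian is related to the reduced datum by $h_i=\delta g_i+c_i$, with $c_i\equiv\delta\phi_i-\ell_{h_i}(\iota(p))$ the \emph{constant} of Theorem \ref{tangent}, and whose $\mathcal{L}$-density variation $\nu_i$ pushes forward to $\delta\mu_i$ along $\iota$, so that every $\mathcal{L}$-integral $\int_{\mathcal{L}}\iota^{*}(\,\cdot\,)\,\nu_i$ converts to the $\Lambda$-integral $\int_\Lambda\iota_\Lambda^{*}(\,\cdot\,)\,\delta\mu_i$, and likewise $\int_{\mathcal{L}}\iota^{*}(\,\cdot\,)\,\mu=\int_\Lambda\iota_\Lambda^{*}(\,\cdot\,)\,\mu$. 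Substituting $h_i=\delta g_i+c_i$ into Theorem \ref{presymplectic_form} and using $\{\,\cdot\,,c_i\}=0$, the part quadratic in the reduced $\delta g_i$ reproduces the three integral lines of the asserted formula, while the pieces linear in the $c_i$ that spill out of those same terms assemble into $c_1\,L_{X_2}p_{\phi,t}-c_2\,L_{X_1}p_{\phi,t}$; since the two $p_{\phi,t}$ terms in Theorem \ref{presymplectic_form} equal $L_{X_1}p_{\phi,t}\,c_2-L_{X_2}p_{\phi,t}\,c_1$, everything involving $c_i$ and $p_{\phi,t}$ cancels, leaving the claimed $\phi$-free expression. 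Equivalently, one may first note that $\Theta_t$ itself reduces to the clean one-form $\theta_t(X)=-\int_\Lambda\iota_\Lambda^{*}(\delta g\,\rho_t)\,\mu$ --- the $p_{\phi,t}$ contributions cancel already at the one-form level --- and then compute $\mathbf{d}\theta_t$ from $\mathbf{d}\theta_t(X_1,X_2)=L_{X_1}(\theta_t(X_2))-L_{X_2}(\theta_t(X_1))-\theta_t([X_1,X_2])$, using that $[X_1,X_2]$ is represented by $\{\delta g_1,\delta g_2\}$ together with the induced density bracket.

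The main obstacle is precisely this cancellation: one must track carefully how the gauge-redundant phase slot $\delta\phi_i$ of $\mathcal{P}_o$ is absorbed into the globally constant part of the reduced Hamiltonian $\delta g_i$, so that after reduction neither $\delta\phi_i$ nor $p_{\phi,t}$ survives. A minor additional point is that the resulting expression must be checked to be insensitive to the residual $W$-ambiguity $(\delta g_i,\delta\mu_i)\mapsto(\delta g_i+f_i,\delta\mu_i+L_{X_{f_i}}\mu)$ with $f_i$ vanishing along $\Lambda$; this is forced by basic-ness of $\mathbf{d}\Theta_t$, but it can also be verified directly by the same integration-by-parts manipulation used in the proof of Theorem \ref{cotangent}, noting that $X_{f_i}$ is then tangent to $\Lambda$.
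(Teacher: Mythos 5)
Your proposal is correct and follows essentially the same route as the paper: both compute $\mathbf{d}\theta_t(X_1,X_2)$ by lifting the tangent vectors to $\mathcal{P}_o$ and inserting the lifts into the presymplectic form of Theorem \ref{presymplectic_form}. The only difference is that the paper picks the particular lift with $\delta\phi_i=\ell_{\delta g_i}(\iota(p))$ so that the $p_{\phi,t}$ terms vanish identically, whereas you work with a general lift and verify that those terms cancel against the constants $c_i$; the result is the same.
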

\begin{proof}
By Eq.\,(\ref{def_reduced}), 
\begin{align}
\mathbf{d}\theta_t(X_1,X_2)=\mathbf{d}\Theta_t(\tilde{X}_1,\tilde{X}_2),
\end{align}
where $\tilde{X}_i$ is any vector tangent to $\mathcal{P}_o$ that satisfies $T\Pi(\tilde{X}_i)=X_i$. For each $i=1,2$, we may set
\begin{align}
\tilde{X}_i=(X_{\delta g_i}\circ\iota,\ell_{\delta g}(\iota(p)),\iota^*\delta\mu)\in T_{(\iota_o,\phi_o,\mu_o)}\mathcal{P}_o,
\end{align}
where $\Pi(\iota_o,\phi_o,\mu_o)=(\Lambda,\tilde{\phi},\mu)$. The result follows by directly computing $\mathbf{d}\Theta_t(\tilde{X}_1,\tilde{X}_2)$ using Theorem \ref{presymplectic_form}.
\end{proof}

\begin{thm}\label{thm3}
Let $F,G$ be functionals on the KMW phase space $\mathcal{P}$. For $Z=(\Lambda,\tilde{\phi},\mu)\in\mathcal{P}$, set
\begin{align}
\mathbf{d}F(Z)&=\left(\frac{\delta F}{\delta g},\frac{\delta F}{\delta\mu}\right)+V\\
\mathbf{d}G(Z)&=\left(\frac{\delta G}{\delta g},\frac{\delta G}{\delta\mu}\right)+V.
\end{align}
The time-dependent $KMW$ Poisson bracket is given by
\begin{align}\label{PB}
&\left\{\left\{F,G\right\}\right\}_t(Z)=\\
&\int_\Lambda\frac{1}{\iota_\Lambda^*\rho_t}\bigg\{\left(\iota_\Lambda^*\frac{\delta F}{\delta\mu}\right)\frac{\delta G}{\delta g}-\left(\iota_\Lambda^*\frac{\delta G}{\delta\mu}\right)\frac{\delta F}{\delta g}\bigg\}\nonumber\\
&-\int_\Lambda\iota_\Lambda^*\bigg(\rho_t\left\{\frac{1}{\rho_t}\frac{\delta F}{\delta\mu},\frac{1}{\rho_t}\frac{\delta G}{\delta\mu}\right\}\bigg)\mu.\nonumber
\end{align}
\end{thm}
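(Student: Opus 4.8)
The plan is to obtain the Poisson bracket by inverting the symplectic form $-\mathbf{d}\theta_t$ of Theorem \ref{thm2}. Recall that for a symplectic manifold with form $\omega=-\mathbf{d}\theta_t$, the Poisson bracket of functionals $F,G$ is $\{\{F,G\}\}_t = \omega(X_F,X_G)$, where the Hamiltonian vector fields $X_F,X_G$ are defined by $\mathrm{i}_{X_F}\omega=\mathbf{d}F$, etc. So the computation has two parts: first, for a given functional $F$ with differential represented (via Theorem \ref{cotangent}) by $(\delta F/\delta g,\delta F/\delta\mu)+V$, solve the equation $\mathbf{d}\theta_t(X_F,\cdot)=-\mathbf{d}F$ for $X_F=(\delta g_F,\delta\mu_F)+W$; second, substitute the resulting $X_F,X_G$ back into the formula for $\mathbf{d}\theta_t$ from Theorem \ref{thm2} and simplify using the pairing \eqref{pairing}.

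First I would write out $\mathbf{d}\theta_t(X_F,X_2)$ using Theorem \ref{thm2} with $X_1=X_F=(\delta g_F,\delta\mu_F)+W$ and $X_2=(\delta g_2,\delta\mu_2)+W$ an arbitrary test vector, and set it equal to $-\langle\mathbf{d}F(Z),X_2\rangle$ computed from \eqref{pairing}. Matching the coefficient of the arbitrary density $\delta\mu_2$ on both sides gives one relation, and matching the part linear in $\delta g_2$ (after an integration by parts to move all Poisson brackets off $\delta g_2$, using that $X_{\delta g_2}$ is tangent to $\Lambda$ and that $\mu$-divergences integrate to zero) gives a second relation. I expect these to read, schematically, $\iota_\Lambda^*(\delta g_F\,\rho_t) = -\,\iota_\Lambda^*\tfrac{\delta F}{\delta\mu}$ up to a function vanishing on $\Lambda$ — i.e. $\delta g_F = -\tfrac{1}{\rho_t}\tfrac{\delta F}{\delta\mu}$ modulo the ideal of $\Lambda$ — and a transport-type equation determining $\delta\mu_F$ in terms of $\delta F/\delta g$, $\delta F/\delta\mu$, $\rho_t$ and $\mu$, of the form $\delta\mu_F + \iota_\Lambda^*(\text{div-type term involving }\rho_t,\delta g_F)\,\mu = -\tfrac{1}{\iota_\Lambda^*\rho_t}\tfrac{\delta F}{\delta g}\,(\text{as a density})$. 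The key point is that both $\delta g_F$ and $\delta\mu_F$ are only determined modulo $W$, which is exactly the freedom needed for the equations to be solvable; I would note explicitly that weak nondegeneracy of $-\mathbf{d}\theta_t$ (inherited from Theorem \ref{thm2} together with Theorems \ref{tangent}--\ref{cotangent}) guarantees a solution exists and is unique mod $W$.

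Then I would substitute $X_F=(\delta g_F,\delta\mu_F)+W$ and $X_G=(\delta g_G,\delta\mu_G)+W$ into the three-line expression for $\mathbf{d}\theta_t$ in Theorem \ref{thm2}. Using $\delta g_F=-\tfrac{1}{\rho_t}\tfrac{\delta F}{\delta\mu}$ and $\delta g_G=-\tfrac{1}{\rho_t}\tfrac{\delta G}{\delta\mu}$ (mod the ideal of $\Lambda$, which drops out under $\iota_\Lambda^*$), the first two lines of $\mathbf{d}\theta_t$ combine: the term $\int_\Lambda \iota_\Lambda^*(\{\delta g_F,\delta g_G\}\rho_t)\mu$ becomes $-\int_\Lambda\iota_\Lambda^*\big(\rho_t\{\tfrac{1}{\rho_t}\tfrac{\delta F}{\delta\mu},\tfrac{1}{\rho_t}\tfrac{\delta G}{\delta\mu}\}\big)\mu$, which is precisely the second integral in \eqref{PB}; the remaining terms involving $\{\rho_t,\delta g\}$ and the $\delta\mu$-terms in $\mathbf{d}\theta_t$ must reorganize, after using the transport relation for $\delta\mu_F$, into $\int_\Lambda\tfrac{1}{\iota_\Lambda^*\rho_t}\{(\iota_\Lambda^*\tfrac{\delta F}{\delta\mu})\tfrac{\delta G}{\delta g}-(\iota_\Lambda^*\tfrac{\delta G}{\delta\mu})\tfrac{\delta F}{\delta g}\}$. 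That last reorganization — carefully tracking signs and the several integration-by-parts moves so that the antisymmetric combination of $\delta F/\delta g$ and $\delta G/\delta g$ emerges cleanly — is the main obstacle; everything before it is bookkeeping, but this step requires the transport equation for $\delta\mu_F$ to be solved in a form that makes the cancellations manifest. I would finish by checking that the resulting bracket is well-defined on representatives (independent of the choices mod $V$ and mod $W$), which follows from the same integration-by-parts identities used in the proof of Theorem \ref{cotangent}.
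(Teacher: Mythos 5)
Your proposal is correct and follows essentially the same route as the paper: define the frozen-time Hamiltonian vector field by $\mathrm{i}_{X_F}\mathbf{d}\theta_t=-\mathbf{d}F$, solve for a representative $(\delta g_F,\delta\mu_F)$ of $X_F$ (obtaining $\delta g_F=-\rho_t^{-1}\,\delta F/\delta\mu$ and a transport-type formula for $\delta\mu_F$, exactly as you anticipate), and then evaluate the bracket --- the paper computes $-\mathbf{d}G(X_F)$ directly via the pairing \eqref{pairing} rather than substituting both $X_F$ and $X_G$ into the two-form of Theorem \ref{thm2}, but the two evaluations coincide. The only nitpick is that weak nondegeneracy guarantees uniqueness of $X_F$ modulo $W$ but not its existence; since you (like the paper) would exhibit the solution explicitly, this does not affect the argument.
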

\begin{proof}
The frozen-time Hamiltonian vector field, $X_{\mathcal{Q}}$, associated with a functional $\mathcal{Q}:\mathcal{P}\rightarrow\mathbb{R}$ is defined by the formula
\begin{align}
\text{i}_{X_{\mathcal{Q}}}\mathbf{d}\theta_t=-\mathbf{d}\mathcal{Q}.
\end{align}
The KMW Poisson brackets can be expressed in terms of frozen-time Hamiltonian vector fields as
\begin{align}
\left\{\left\{F,G\right\}\right\}_t=-\mathbf{d}\theta_t(X_F,X_G)=-\mathbf{d}G(X_F).
\end{align}
Using the previous theorem, it is straightforward to verify that the Hamiltonian vector field $X_F$ is given by the formula
\begin{align}\label{ham_vec}
X_F(Z)=(\delta g_F,\delta \mu_F)+W,
\end{align}
where
\begin{align}
\delta g_F&=-\frac{1}{\rho_t}\frac{\delta F}{\delta\mu}\\
\delta\mu_F&=\frac{1}{\iota_\Lambda^*\rho_t}\bigg(\frac{\delta F}{\delta g}+\iota_\Lambda^*\left\{\rho_t,\frac{1}{\rho_t}\frac{\delta F}{\delta \mu}\right\}\,\mu\bigg).
\end{align}
Using these explicit expressions for $X_F(Z)$ and Eq.\,(\ref{pairing}), the expression given above for the KMW Poisson bracket can be verified by calculating $-\mathbf{d}G(X_F)$ directly.
\end{proof}

If $\rho_t$ does not depend on time, then the KMW Poisson brackets and the KMW symplectic form are independent of time. In this special case, given a time-dependent Hamiltonian function $\mathcal{H}_t$, the associated time-dependent Hamiltonian vector field can be computed using the formula
\begin{align}\label{time_indep}
L_{X_{\mathcal{H}_t}}\mathcal{Q}=\left\{\left\{\mathcal{Q},\mathcal{H}_t\right\}\right\},
\end{align}
where $\mathcal{Q}$ is an arbitrary functional on $\mathcal{P}$.

If $\rho_t$ does depend on time, then the time-dependent Hamiltonian vector field associated with $\mathcal{H}_t$ cannot be determined using Eq.\,(\ref{time_indep}). Instead it must be determined using the time-dependent Hamilton equations mentioned earlier in the section
\begin{align}
\text{i}_{X_{\mathcal{H}_t}}\mathbf{d}\theta_t=-\mathbf{d}\mathcal{H}_t-\dot{\theta}_t.
\end{align}
In terms of the time-dependent Poisson tensor, $P_t$, defined by the time-dependent KMW Poisson brackets, $X_{\mathcal{H}_t}$ is given by
\begin{align}\label{time_dep}
X_{\mathcal{H}_t}=P_t(\mathbf{d}\mathcal{H}_t+\dot{\theta}_t).
\end{align}

Using Eq.\,(\ref{time_dep}) and the expression
\begin{align}
\dot{\theta}_t(Z)=(-(\iota_\Lambda^*\dot{\rho}_t)\mu,0)+V,
\end{align}
we obtain the following theorem
\begin{thm}
Given a time-dependent functional $\mathcal{H}_t:\mathcal{P}\rightarrow\mathbb{R}$, the associated time-dependent Hamiltonian vector field, $X_{\mathcal{H}_t}$, is given by
\begin{align}
X_{\mathcal{H}_t}(z)=(\delta g_{\mathcal{H}_t},\delta\mu_{\mathcal{H}_t})+W,
\end{align}
where
\begin{align}
\delta g_{\mathcal{H}_t}&=-\frac{1}{\rho_t}\frac{\delta \mathcal{H}_t}{\delta\mu}\\
\delta\mu_{\mathcal{H}_t}&=\frac{1}{\iota_\Lambda^*\rho_t}\bigg(\frac{\delta \mathcal{H}_t}{\delta g}-\iota_\Lambda^*\dot{\rho}_t\,\mu+\iota_\Lambda^*\left\{\rho_t,\frac{1}{\rho_t}\frac{\delta \mathcal{H}_t}{\delta \mu}\right\}\,\mu\bigg).
\end{align}
In particular, when $\mathcal{H}_t=e_t$, these expressions yield the KMW dynamical equations on $\mathcal{P}$
\begin{align}
\delta g_{e_t}&=E_t\\
\delta \mu_{e_t}&=-\iota_\Lambda^*\bigg(\frac{\dot{\rho}_t}{\rho_t}+\frac{1}{\rho_t}\left\{\rho_t,E_t\right\}\bigg)\,\mu.
\end{align}
\end{thm}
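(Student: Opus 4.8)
The plan is to read $X_{\mathcal{H}_t}$ off directly from the time-dependent Hamilton equation in the form already recorded as Eq.\,(\ref{time_dep}), namely $X_{\mathcal{H}_t}=P_t(\mathbf{d}\mathcal{H}_t+\dot{\theta}_t)$, using an explicit description of the Poisson tensor $P_t$. The key observation is that the formula for the frozen-time Hamiltonian vector field $X_F$ derived inside the proof of Theorem \ref{thm3} (Eq.\,(\ref{ham_vec})) depends on $\mathbf{d}F$ only through its chosen representative $(\delta F/\delta g,\delta F/\delta\mu)$ in the normal form of Theorem \ref{cotangent}; hence the same formula computes $P_t$ on an \emph{arbitrary} cotangent vector, giving
\begin{align}
P_t\big((\kappa,\tilde{p}_\mu)+V\big)=\left(-\frac{1}{\rho_t}\tilde{p}_\mu,\ \frac{1}{\iota_\Lambda^*\rho_t}\left(\kappa+\iota_\Lambda^*\left\{\rho_t,\frac{1}{\rho_t}\tilde{p}_\mu\right\}\mu\right)\right)+W.
\end{align}
(That $P_t$ is well defined as a map $T^*\mathcal{P}\to T\mathcal{P}$ at all relies on the weak nondegeneracy of $-\mathbf{d}\theta_t$ furnished by Theorems \ref{tangent}--\ref{thm2}; here it matters that we need $P_t$ on the possibly non-exact one-form $\dot{\theta}_t$, not merely on $\mathbf{d}\mathcal{H}_t$.) With this in hand the entire computation reduces to writing $\mathbf{d}\mathcal{H}_t+\dot{\theta}_t$ in the $(\kappa,\tilde{p}_\mu)+V$ form and applying the displayed map.

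The one nontrivial ingredient is thus the identification of $\dot{\theta}_t$ as an element of $T^*_Z\mathcal{P}$. I would obtain it by differentiating the reduced one-form $\theta_t$ in $t$. Since $\theta_t$ descends from $\Theta_t$ of Eq.\,(\ref{pre_one_form}) and, as in the proof of Theorem \ref{tangent}, the canonical lift of a tangent vector to $\mathcal{P}$ can be chosen so that the factor multiplying $p_{\phi,t}$ in $\Theta_t$ vanishes identically, one obtains $\theta_t(X)=-\int_\Lambda\iota_\Lambda^*(\delta g\,\rho_t)\,\mu$, whence $\dot{\theta}_t(X)=-\int_\Lambda\iota_\Lambda^*(\delta g\,\dot{\rho}_t)\,\mu$. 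Matching this linear functional against the pairing formula Eq.\,(\ref{pairing}) --- and invoking the weak nondegeneracy established in Theorem \ref{cotangent} to guarantee uniqueness modulo $V$ --- one reads off $\dot{\theta}_t(Z)=(-(\iota_\Lambda^*\dot{\rho}_t)\mu,0)+V$. Adding $\mathbf{d}\mathcal{H}_t=(\delta\mathcal{H}_t/\delta g,\delta\mathcal{H}_t/\delta\mu)+V$ yields the representative $(\kappa,\tilde{p}_\mu)=(\delta\mathcal{H}_t/\delta g-(\iota_\Lambda^*\dot{\rho}_t)\mu,\ \delta\mathcal{H}_t/\delta\mu)$, and substituting into the displayed $P_t$ formula produces exactly the stated expressions for $\delta g_{\mathcal{H}_t}$ and $\delta\mu_{\mathcal{H}_t}$.

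For the ``in particular'' clause I would set $\mathcal{H}_t=e_t$ and compute $\mathbf{d}e_t$ from $\Pi^*e_t=\mathcal{E}_t=-\int_{\mathcal{L}}\iota^*(E_t\rho_t)\,\mu$. Varying $\mu$ identifies the representative $\delta e_t/\delta\mu=-E_t\rho_t$; varying $\iota$ along a Hamiltonian vector field $X_{\delta g}$ produces only the term $-\int\iota^*(\{E_t\rho_t,\delta g\})\mu$, which is precisely the cross term $\int_\Lambda\iota_\Lambda^*(\{\tilde{p}_\mu,\delta g\})\mu$ of Eq.\,(\ref{pairing}) with $\tilde{p}_\mu=\delta e_t/\delta\mu$, so that the conjugate density is $\delta e_t/\delta g=0$. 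Feeding $(\kappa,\tilde{p}_\mu)=(-(\iota_\Lambda^*\dot{\rho}_t)\mu,-E_t\rho_t)$ into the general formula and using $\frac{1}{\rho_t}\frac{\delta e_t}{\delta\mu}=-E_t$ gives $\delta g_{e_t}=E_t$ and $\delta\mu_{e_t}=-\iota_\Lambda^*\big(\frac{\dot{\rho}_t}{\rho_t}+\frac{1}{\rho_t}\{\rho_t,E_t\}\big)\mu$, as claimed. As a consistency check one may instead note that Eq.\,(\ref{hamilton}) together with Eq.\,(\ref{time_dep}) forces $x_t=X_{e_t}$, while $x_t$ is by construction the pushforward to $\mathcal{P}$ of the KMWP vector field defined by Eqs.\,(\ref{idot})--(\ref{mudot}); in the $(\delta g,\delta\mu)+W$ coordinates Eq.\,(\ref{idot}) reads $\delta g=E_t$ and Eq.\,(\ref{mudot}), after expanding $\frac{d}{dt}\iota_t^*\rho_t=\iota_t^*(\dot{\rho}_t+\{\rho_t,E_t\})$, reads $\delta\mu=-\iota_\Lambda^*(\frac{\dot{\rho}_t}{\rho_t}+\frac{1}{\rho_t}\{\rho_t,E_t\})\mu$.

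The main obstacle I expect is entirely one of bookkeeping within the two quotient spaces: verifying that $\dot{\theta}_t$ and $\mathbf{d}\mathcal{H}_t+\dot{\theta}_t$ are well defined modulo $V$, that the output pair $(\delta g_{\mathcal{H}_t},\delta\mu_{\mathcal{H}_t})$ is well defined modulo $W$, and --- most delicately --- that the $p_{\phi,t}$-dependent piece of $\Theta_t$ genuinely drops out of both $\theta_t$ and $\dot{\theta}_t$ under the canonical choice of lift used in Theorems \ref{tangent} and \ref{thm2}. None of these points is conceptually deep, but each must be handled carefully with the quotient structures of Theorems \ref{tangent} and \ref{cotangent} kept in view.
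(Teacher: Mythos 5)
Your proposal is correct and follows essentially the same route as the paper: the theorem is obtained there by substituting the Poisson-tensor formula implicit in Eq.\,(\ref{ham_vec}) and the asserted expression $\dot{\theta}_t(Z)=(-(\iota_\Lambda^*\dot{\rho}_t)\mu,0)+V$ into Eq.\,(\ref{time_dep}), which is exactly your computation. Your derivation of $\dot{\theta}_t$ and the consistency check against Eqs.\,(\ref{idot})--(\ref{mudot}) supply details the paper leaves implicit, but they do not constitute a different method.
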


\section{Discussion\label{discussion}}
In the previous sections, we first explicitly identified the phase space for Keller-Maslov waves (KMWs), which are a generalization of eikonal waves first introduced by Keller\,\cite{Keller_1958}, and later by Maslov\,\cite{maslov}. We then proceeded to formulate the KMW dynamical equations as an infinite-dimensional Hamiltonian system. Theorem\,\ref{thm0} characterizes the KMW phase space, $\mathcal{P}$. Theorems\,\ref{thm1},\,\ref{thm2}, and\,\ref{thm3} specify the KMW variational principle, symplectic form, and Poisson brackets, respectively.

In light of these results, we clearly see how the Hamiltonian structure underlying KMW dynamics varies from wave equation to wave equation. If $D(q,p,t,U)$ is the principal symbol of the wave equation operator $\mathfrak{D}$, then the KMW Poisson bracket and energy functional only depend on $\mathfrak{D}$ \emph{via} the derived quantities $E_t$ and $\rho_t$, where $E_t(q,p)$ is defined implicitly by
\begin{align}
D(q,p,t,-E_t(q,p))=0,
\end{align} 
and $\rho_t(q,p)$ is given by
\begin{align}
\rho_t(q,p)=\frac{\partial D}{\partial U}(q,p,t,-E_t(q,p)).
\end{align}
The quantity $E_t$ is proportional to the wave frequency. The quantity $\rho_t$ is the multiplicative factor that relates wave action density (regarded as a density on the wave's Lagrangian submanifold $\Lambda$), $I_t$, to the squared modulus of the wave amplitude, $\mu_t$, i.e. $I_t=(\iota_\Lambda^*\rho_t)\,\mu_t$. 

When
\begin{align}
\mathfrak{D}=i\hbar\frac{\partial}{\partial t}-\hat{H}_t,
\end{align}
where $\hat{H}_t$ is the quantum Hamiltonian operator, $\rho=-1$. In this case the KMW symplectic form becomes very closely related to the symplectic form introduced by Weinstein in Ref.\,\cite{Weinstein_1990}. Weinstein's symplectic form is defined on an \emph{isodrast} in the space of \emph{weighted} Lagrangian submanifolds. This sort of space can be obtined from the KMW phase space by restricting attention to KMWs that have fixed total wave action, $p_\phi$, where
\begin{align}
p_\phi(\Lambda,\mu)=-\int_{\Lambda}\,\mu,
\end{align}
and then factoring out the symmetry given by global rotation of the phase factor $\tilde{\phi}$. On this reduced KMW phase space\,\cite{FoM}, the KMW symplectic form is equal to Weinstein's symplectic form.

There is already precedent for developing numerical algorithms for semiclassical wave propagation in terms the KMW phase space. References.\,\cite{Zheng_2013,Jin_2008,Liu_2006} provide recent illustrations of this fact. In particular, these works demonstrate that simulation of semiclassical wave propagation in the presence of caustics can be significantly simplified by working in terms of the abstract KMW $(\Lambda,\tilde{\phi},\mu)\in\mathcal{P}$. The results of the present study therefore suggest the relevance of the following question. Is it possible to formulate numerical integrators for semiclassical wave propagation that preserve the KMW symplectic (or Poisson) structure? One fruitful way to address this question may be to ``discretize" the variational principle given in Theorem\,\ref{thm1} in the spirit of Ref.\,\cite{Marsden_2001}.

\providecommand{\noopsort}[1]{}\providecommand{\singleletter}[1]{#1}%
%


\end{document}